\documentclass[reqno,tbtags,intlimits,a4paper,oneside,12pt]{amsart}
\usepackage[cp1251]{inputenc}%
\usepackage[english]{babel}
\usepackage{amssymb,upref,calc,url}
\usepackage[mathscr]{eucal}
\usepackage{graphicx}
\usepackage{amsmath,amscd,amsthm,verbatim}
\usepackage[all]{xy}
\hfuzz 16pt

\usepackage[in]{fullpage}
\newtheorem{thm}{\hskip\parindent Theorem}[section]
\newtheorem{lem}[thm]{\hskip\parindent Lemma}
\newtheorem{prb}[thm]{\hskip\parindent Problem}
\newtheorem{cor}[thm]{\hskip\parindent Corollary}

\theoremstyle{definition}
\newtheorem{dfn}{\hskip\parindent Definition}[section]
\newtheorem{ex}[dfn]{\hskip\parindent Example}
\newtheorem{rem}[dfn]{\hskip\parindent Remark}
\pagestyle{plain}

\DeclareMathOperator{\wt}{wt}

\newcommand{\CU}{\mathcal{C}(z,\lambda)}

\newcommand{\Sym}{\mathop{\rm Sym}}
\newcommand{\Sh}{\mathop{\rm Sh}}
\newcommand{\ShW}{\mathop{\rm ShW}}
\newcommand{\bx}{\mathbf{x}}
\newcommand{\be}{\mathbf{e}}
\newcommand{\bp}{\mathbf{p}}

\begin{document}

\title{Hyperelliptic Sigma functions\\ and \\Adler--Moser polynomials}
\author{V.\,M.~Buchstaber, E.\,Yu.~Bunkova}
\address{Steklov Mathematical Institute of Russian Academy of Sciences, Moscow, Russia}
\email{buchstab@mi-ras.ru, bunkova@mi-ras.ru}
\keywords{\; Schr\"odinger operator, polynomial Lie algebra, polynomial dynamical system, heat equation in nonholonomic frame, differentiation of Abelian functions over parameters, Adler--Moser polynomial, Burchnall--Chaundy equation, Korteweg--de Vries equation}

\begin{abstract}
In a 2004 paper by V.\,M.\,Buchstaber and  D.\,V.\,Leykin, published in ``Functional Analysis and Its Applications,'' for each $g > 0$, a system of $2g$ multidimensional heat equations in a~nonholonomic frame was constructed. The sigma function of the universal hyperelliptic curve of genus $g$ is a solution of this system.
In our previous work, published in ``Functional Analysis and Its Applications,'' explicit expressions for the Schr\"odinger operators that define the equations of the system considered were obtained in the hyperelliptic case.

In this work we use these results to show that if the initial condition of the system considered is polynomial, then the solution of the system is uniquely determined up to a constant factor.
This has important applications in the well-known problem of series expansion for the hyperelliptic sigma function.
We give an explicit description of the connection of such solutions to well-known Burchnall--Chaundy polynomials and Adler--Moser polynomials. We find a system of linear second-order differential equations that determines the corresponding Adler--Moser polynomial.
\end{abstract}

\maketitle

\section{Introduction} \label{S0}
 
Let $g \in \mathbb{N}$. For a meromorphic function $f$ in $\mathbb{C}^g$, a vector $\omega \in \mathbb{C}^g$ is a \emph{period} if $f(z+\omega) = f(z)$ for all~$z \in \mathbb{C}^g$.
If a meromorphic function $f$
has $2g$ independent periods in~$\mathbb{C}^g$, then $f$ is called an \emph{Abelian function}.
Thus, an Abelian function is a meromorphic~function on the complex torus $T^g = \mathbb{C}^g/\Gamma$,
where $\Gamma$ is the lattice formed by the periods.

We work with the universal hyperelliptic curve of genus $g$ in the model
\[
\mathcal{V}_\lambda = \{(x, y)\in\mathbb{C}^2 \colon
y^2 = x^{2g+1} + \lambda_4 x^{2 g - 1}  + \lambda_6 x^{2 g - 2} + \ldots + \lambda_{4 g} x + \lambda_{4 g + 2}\}. 
\]
Each curve is defined by specialization of parameters $\lambda = (\lambda_4, \lambda_6, \ldots, \lambda_{4 g}, \lambda_{4 g + 2}) \in \mathbb{C}^{2 g}$.
Let $\mathcal{B} \subset \mathbb{C}^{2g}$ be the subspace of parameters such that the curve $\mathcal{V}_{\lambda}$ is nonsingular for~$\lambda \in~\mathcal{B}$.
Then we have $\mathcal{B} = \mathbb{C}^{2g} \backslash \Sigma$, where $\Sigma$ is~the discriminant hypersurface of the~universal curve.

For each $\lambda \in \mathcal{B}$, the set of periods of holomorphic differentials on the curve $\mathcal{V}_\lambda$ generates a lattice $\Gamma_\lambda$ of rank $2 g$ in $\mathbb{C}^g$. 
A \emph{hyperelliptic function of genus} $g$ (see \cite{B2}, \cite{BEL-12}, and \cite{BEL18}) is a~meromorphic function on $\mathbb{C}^g \times \mathcal{B}$
such that, for each $\lambda \in \mathcal{B}$, it's restriction on $\mathbb{C}^g \times \lambda$
is an Abelian function. Here the torus~$T^g$ is~the~Jacobian variety $\mathcal{J}_\lambda = \mathbb{C}^g/\Gamma_\lambda$ of the curve~$\mathcal{V}_\lambda$.
We denote by $\mathcal{F}$ the field of hyperelliptic functions of genus $g$. For the properties of this field, see \cite{BEL-12} and \cite{BEL18}. 

We use the theory of hyperelliptic Kleinian functions (see \cite{BEL-12}, \cite{Baker}, \cite{BEL}, \cite{BEL-97}, and~\cite{WW} for elliptic functions).
Take the coordinates
$(z, \lambda)$
in $\mathbb{C}^g \times \mathcal{B} \subset \mathbb{C}^{3g}$.
Let $\sigma(z, \lambda)$ be the hyperelliptic sigma function (or the elliptic sigma function in the case of the genus $g=1$). We denote~$\partial_k = {\partial \over \partial z_k}$.
Following \cite{B2}, \cite{BEL18}, and \cite{B3}, we use the notation
\[
\zeta_{k} = \partial_k \ln \sigma(z, \lambda), \qquad
\wp_{k_1, \ldots, k_n} = - \partial_{k_1} \cdots \partial_{k_n} \ln \sigma(z, \lambda),
\]
where $n \geqslant 2$ and $k_s \in \{ 1, 3, \ldots, 2 g - 1\}$.
The functions $\wp_{k_1, \ldots, k_n}$ give examples of~hyperelliptic functions. 
The field $\mathcal{F}$ is the field of fractions of the polynomial ring~$\mathcal{P}$ generated by the functions $\wp_{k_1, \ldots, k_n}$, where $n \geqslant 2$ and $k_s \in \{ 1, 3, \ldots, 2 g - 1\}$. See also \cite{BMulti}.

Note that we denote the coordinates in~$\mathbb{C}^g$ by~$z = (z_1, z_3, \ldots, z_{2g-1})$. The indices of coordinates $z = (z_1, z_3, \ldots, z_{2g-1}) \in \mathbb{C}^g$ and of parameters $\lambda = (\lambda_4, \lambda_6, \ldots, \lambda_{4 g}, \lambda_{4 g + 2}) \in \mathbb{C}^{2 g}$ determine their weights as
$\wt z_k = - k$ and $\wt \lambda_k = k$.
For suitable weights of all the~other variables all the equations in this paper are of homogeneous weight.

We note a property of the hyperelliptic sigma function
$\sigma(z, \lambda)$, that holds as well for the~more general case of $(n,s)$-curves \cite{ns}: at $z=0$ it can be decomposed as a homogeneous series of degree $-\frac{1}{2}g(g+1)$ in $z$ with polynomial coefficients in $\lambda$. We have
\[
\sigma(z, \lambda) = \sum_{|I|>0}\sigma_I(\lambda)z^I,
\]
where $I = (i_1,i_3, \ldots, i_{2g-1}) \in \mathbb{Z}_\geqslant^{g}$, $z^I = z_1^{i_1}\cdots z_{2g-1}^{i_{2g-1}}$, and $\sigma_{I}(\lambda) \in \mathbb{Q}[\lambda]$.
In this work we~normalize the hyperelliptic sigma function by the condition
\[
\sigma(z, \lambda) = z_1^{\frac{1}{2}g(g+1)} + \ldots 
\]

In the case $g=1$ we have the elliptic Weierstrass sigma function with $g_2 = -4\lambda_4$ and~$g_3 = -4\lambda_6$:
\[
\sigma(z, \lambda) = z_1 - \frac{g_2}{2}\,\frac{z_1^5}{5!} - 6g_3\frac{z_1^7}{7!} + \ldots
\]

In the case $g=2$ we have the hyperelliptic Kleinian sigma function:
\[
\sigma(z, \lambda) = z_1^3 - 3 z_3 + {1 \over 420} \lambda_4 z_1^7 + {1 \over 4} \lambda_4 z_1^4 z_3 - {1 \over 1890} \lambda_6 z_1^9 + {1 \over 30} \lambda_6 z_1^6 z_3 + {1 \over 2} \lambda_6 z_1^3 z_3^2 - { 1 \over 2} \lambda_6 z_3^3 + \ldots
\]

For $g=1$ and $2$ there exist efficient algorithms for recursive calculation of coefficients $\sigma_I(\lambda)$ of the series $\sigma(z, \lambda)$,
see \cite{BL-05}, \cite{4A}. For any $g\geqslant 1$ the construction of the series~$\sigma(z, \lambda)$ is known, see \cite{Nak-10}. This construction may be implemented for any curve  $\mathcal{V}_\lambda$, including singular curves, see \cite{Nonhol}, \cite{BL-05}.

In the case of the most singilar curve, namely the curve
\[
\mathcal{V}_0 = \{(x, y)\in\mathbb{C}^2 \colon y^2 = x^{2g+1} \}, 
\]
the sigma function
is given by the homogeneous polynomial $\sigma(z, 0)$, that is called \emph{the rational limit of the sigma function}.

Let us define the \emph{polynomial Lie algebra of vector fields tangent to the discriminant hypersurface} $\Sigma$ \emph{in}~$\mathbb{C}^{2g}$.
We denote it by~$\mathscr{L}_{L}$.
For this polynomial Lie algebra \cite{BPol},
the generators $\{L_{0}, L_{2}, L_{4}, \ldots, L_{4 g - 2}\}$ are the vector fields 
\[
 L_{2k} = \sum_{s=2}^{2g+1} v_{2k+2, 2s-2}(\lambda) {\partial \over \partial \lambda_{2k}},
\]
where $v_{2k+2, 2s-2}(\lambda) \in P$,
and $P$ is the ring of polynomials in $\lambda \in \mathcal{B} \subset \mathbb{C}^{2g}$.
At a point~$\lambda \in \mathcal{B}$ these vector fields determine a $2g$-dimensional nonholonomic frame.

The structure of a Lie algebra as a $P$-module with
generators
$
1, \; L_{0}, \; L_{2}, \; L_{4}, \; \ldots, \; L_{4 g - 2}
$ is determined by the polynomial matrices $V(\lambda)=(v_{2i,2j}(\lambda)),$ where
$i,j=1,\dots,2g$, and~$C(\lambda) = (c_{2i,2j}^{2k}(\lambda))$, where $i,j,k=0,\dots,2g-1$, such that 
\[
[L_{2i},L_{2j}]=\sum_{k=0}^{2g-1} c_{2i,2j}^{2k}(\lambda)L_{2k},
\quad[L_{2i},\lambda_{2q}]=v_{2i+2,2q-2}(\lambda),
\quad[\lambda_{2q},\lambda_{2r}]=0.
\]
Here $\lambda_q$ is the operator of multiplication 
by
the function $\lambda_q$ in $P$.

For the Lie algebra $\mathscr{L}_{L}$,
explicit expressions for the matrix $V(\lambda)$ 
can be found in Section 4.1 of \cite{A} (see also~\cite{BPol} and Lemma~3.1 in \cite{4A}). 
The elements of this matrix are given by the following formulas. For~convenience, we~assume that $\lambda_s = 0$ for all $s \notin \{0,4,6, \ldots, 4 g, 4 g + 2\}$ and~$\lambda_0 = 1$.
For $k, m \in \{ 1, 2, \ldots, 2 g\}$, if $k\leqslant m$, then we set
\[
 v_{2k, 2m}(\lambda) = \sum_{s=0}^{k-1} 2 (k + m - 2 s) \lambda_{2s} \lambda_{2 (k+m-s)}
 - {2 k (2 g - m + 1) \over 2 g + 1} \lambda_{2k} \lambda_{2m},
\] 
and if $k > m $, then we set $v_{2k, 2m}(\lambda) = v_{2m, 2k}(\lambda)$.
The structure polynomials $c_{2i,2j}^{2s}(\lambda)$ are described in Theorem~2.5 of~\cite{Nonhol}.

The vector field $L_0$ is the Euler vector field; namely, since $\wt \lambda_{2k} = 2 k$, we have
\begin{align*} 
 [L_0, \lambda_{2k}] &= 2 k \lambda_{2k}, & [L_0, L_{2k}] &= 2 k L_{2k}.
\end{align*}
This determines the weights of the vector fields $L_k$, namely, $\wt L_{2k} = 2 k$.

The classical \emph{Lie--Witt algebra} $W_0$ (see \cite{BM}) over the field $\mathbb{C}$ of complex numbers is generated by the operators $l_{2i}$, where $i = 0, 1, 2, \ldots$, with the commutation relations
\[
 [l_{2i}, l_{2j}] = 2 (j-i) l_{2 (i+j)}.
\]
With respect to the bracket $[\cdot,\cdot]$ the~Lie--Witt algebra $W_0$ is generated by the three operators $l_0$, $l_2$, and $l_4$.
The graded polynomial Lie algebra $\mathscr{L}_{L}$
over $P$ is a deformation of the~Lie--Witt algebra $W_0$. It is also generated by only three operators,  $L_0$, $L_2$, and~$L_4$. The following relation holds (see Lemma 3.3 in \cite{BB20}):
\[
[L_2, L_{2k}] = 2 (k-1) L_{2k+2} + {4 (2 g - k) \over (2 g + 1)} \left( \lambda_{2k+2} L_0 - \lambda_4 L_{2k-2}\right). 
\]

Now we introduce the \emph{Schr\"odinger operators}. We consider the space $\mathbb{C}^{3g}$ with coordinates $(z, \lambda)$
and let $\CU$ denote the ring of differentiable functions in $z$ and
$\lambda$. We~set
\begin{equation} \label{HQ}
Q_{2k} = L_{2k} - H_{2k}, \qquad k = 0, 1, 2, \ldots, 2g-1,
\end{equation}
where
\begin{equation} \label{e2}
H_{2k} = {1 \over 2} \sum \left( \alpha_{a,b}^{(k)}(\lambda)\partial_a \partial_b + 2 \beta_{a,b}^{(k)}(\lambda)z_a\partial_b + \gamma_{a,b}^{(k)}(\lambda)z_a z_b\right) + \delta^{(k)}(\lambda),
\end{equation}
the summation is over odd $a$ and $b$ from $1$ to $2g-1$, and $\alpha_{a,b}^{(k)}(\lambda)$, $\beta_{a,b}^{(k)}(\lambda)$, $\gamma_{a,b}^{(k)}(\lambda)$, and~$\delta^{(k)}(\lambda)$ are polynomials in $\lambda$.

\begin{dfn}
The system of equations
\begin{equation} \label{e3}
Q_{2k} \psi = 0
\end{equation}
for $\psi = \psi(z, \lambda)$
is called the \emph{system of heat equations in nonholonomic frame} $L_k$. The operators $Q_{2k}$ are called \emph{Schr\"odinger operators}.
\end{dfn}

In \cite{Nonhol} a solution to the following problem is given. 

\begin{prb}\label{pro1} Find sufficient conditions on 
$\bigl\{\alpha^{(i)}(\lambda),\beta^{(i)}(\lambda),\gamma^{(i)}(\lambda),\delta^{(i)}(\lambda)\bigr\}$
for the operators \eqref{HQ} to give a representation of the Lie
algebra $\mathscr{L}_{L}$ in the ring of operators on~$\CU.$
\end{prb}

As shown in \cite{Nonhol}, the system of heat equations with Schr\"odinger operators $Q_{2k}$ that give a solution to~Problem \ref{pro1} determines the hyperelliptic sigma function $\sigma(z, \lambda)$. This allows to~construct the hyperelliptic Kleinian functions theory starting from such a system.
Further we denote by $Q_{2k}$ the Schr\"odinger operators that give a solution to~Problem~\ref{pro1}. A construction of these operators is given in \cite{Nonhol}.

In this work we show that for this system for any solution $\psi(z,\lambda)$ of \eqref{e3} such that the expression $\psi(z,0)$ is a polynomial, this polynomial coincides with the rational limit of the sigma function $\sigma(z, 0)$ up to a constant factor. The condition $\psi(1,0,\ldots,0) = 1$ normalizes this constant factor.

By Theorem 2.6 in \cite{Nonhol}, if~$\psi(z, \lambda)$ is an entire function such that $\psi(z, 0) = \sigma(z, 0)$, then it coincides with the hyperelliptic sigma function~$\sigma(z, \lambda)$. Therefore we obtain:

\begin{thm} \label{T12}
If an entire function $\psi(z, \lambda)$ is a solution of a system of heat equations~\eqref{e3} with Schr\"odinger operators $Q_{2k}$ that give a solution to~Problem \ref{pro1}, and~$\psi(z,0)$ is a polynomial with $\psi(1,0,\ldots,0) = 1$, then $\psi(z, \lambda)$ is the hyperelliptic sigma function.
\end{thm}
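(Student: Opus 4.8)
The plan is to split the statement into two independent parts: a uniqueness statement for the initial data at $\lambda = 0$, and Theorem 2.6 of \cite{Nonhol}. Once we know that the only polynomial solutions of \eqref{e3} restricted to $\lambda = 0$ are the scalar multiples of the rational limit $\sigma(z,0)$, the theorem follows at once: the restriction $\psi(z,0)$ of an entire solution is by hypothesis such a polynomial, hence $\psi(z,0) = c\,\sigma(z,0)$; since $\sigma(z,0) = z_1^{\frac12 g(g+1)} + \ldots$ has $\sigma(1,0,\ldots,0) = 1$, the normalization $\psi(1,0,\ldots,0) = 1$ forces $c = 1$; and then Theorem 2.6 of \cite{Nonhol} gives $\psi(z,\lambda) = \sigma(z,\lambda)$. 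Thus the entire content of the theorem is the uniqueness of the polynomial initial condition.

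The first step is to restrict the system \eqref{e3} to $\lambda = 0$. Every coefficient $v_{2k,2m}(\lambda)$ vanishes at $\lambda = 0$: in the explicit formula the only summand free of a factor $\lambda_{2s}$ with $s \geqslant 1$ is $2(k+m)\lambda_0\lambda_{2(k+m)}$, and this is zero because $\lambda_{2(k+m)} = 0$ at $\lambda = 0$ for $k+m \geqslant 1$. Hence $(L_{2k}\psi)(z,0) = 0$ for all $k$, and \eqref{e3} at $\lambda = 0$ becomes $H_{2k}^{(0)}\,\psi(z,0) = 0$, where $H_{2k}^{(0)} = H_{2k}\big|_{\lambda=0}$ is a constant-coefficient operator in $z$ only. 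Moreover, since $\mathscr{L}_{L}$ is generated by $L_0, L_2, L_4$ and the $Q_{2k}$ give a representation of $\mathscr{L}_{L}$, the relation $[L_2,L_{2k}] = 2(k-1)L_{2k+2} + \tfrac{4(2g-k)}{2g+1}\bigl(\lambda_{2k+2}L_0 - \lambda_4 L_{2k-2}\bigr)$ shows inductively that $Q_0\psi = Q_2\psi = Q_4\psi = 0$ already imply $Q_{2k}\psi = 0$ for every $k$. It therefore suffices to analyse the three equations $H_0^{(0)}\psi(z,0) = H_2^{(0)}\psi(z,0) = H_4^{(0)}\psi(z,0) = 0$.

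The second step extracts homogeneity and then uniqueness. A weight count in \eqref{e2} shows that at $\lambda = 0$ only the diagonal first-order part of $H_0$ survives, so $H_0^{(0)}$ is, up to an additive constant, the Euler operator $\sum_a a\,z_a\partial_a$ in $z$; hence $H_0^{(0)}\psi(z,0) = 0$ is equivalent to $\psi(z,0)$ being homogeneous of weight $-\tfrac12 g(g+1)$, and the problem collapses to the finite-dimensional space $\mathcal{W}$ of homogeneous polynomials in $z_1, z_3, \ldots, z_{2g-1}$ of that weight. The same weight count shows that on such functions $H_2^{(0)}$ reduces to a combination of $\partial_1^2$ and the index-shifting terms $z_a\partial_{a+2}$, while $H_4^{(0)}$ reduces to a combination of $\partial_1\partial_3$ and $z_a\partial_{a+4}$. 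Grading $\mathcal{W}$ by the level $\ell(z^I) = \sum_a \tfrac{a-1}{2}\,i_a$, for which $z_1^{\frac12 g(g+1)}$ is the unique monomial of level $0$, the index-shifting terms lower $\ell$ by one or two while the second-order terms do not raise it; thus $H_2^{(0)}\psi = 0$ and $H_4^{(0)}\psi = 0$ together express the coefficients of the monomials of level $\ell+1$ through those of level $\leqslant \ell$. Hence the coefficient of $z_1^{\frac12 g(g+1)}$ determines $\psi(z,0)$ completely, the solution space is $\mathbb{C}\,\sigma(z,0)$, and the uniqueness — and with it the theorem — follows. At $\lambda = 0$ the relation above even degenerates to $[L_2,L_{2k}] = 2(k-1)L_{2k+2}$, the defining relation of the Lie--Witt algebra $W_0$, so that the $H_{2k}^{(0)}$ realise a representation of $W_0$ with $\sigma(z,0)$ as its distinguished singular vector.

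I expect the genuine obstacle to be precisely this triangularity. One must verify that the index-shifting part of $H_2^{(0)}$ (together with $H_4^{(0)}$ where needed) really does solve for every level-$(\ell+1)$ coefficient, i.e.\ that the relevant linear maps between the level components leave no additional kernel beyond the one free parameter. This is where the explicit values $\alpha_{a,b}^{(k)}(0)$ and $\beta_{a,b}^{(k)}(0)$ from \eqref{e2} must be inserted and the induction order fixed precisely; by contrast the reductions of the earlier steps are formal. The cases $g = 1, 2$, where $\mathcal{W}$ has dimension one and two and the surviving relations reproduce $\sigma(z,0) = z_1$ and $\sigma(z,0) = z_1^3 - 3 z_3$, provide a check on the signs and normalizations.
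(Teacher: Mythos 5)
Your global strategy coincides with the paper's: restrict \eqref{e3} to $\lambda = 0$ (using that the coefficients of the vector fields $L_{2k}$ vanish there), show that the polynomial solutions of the rational-limit system $\widehat{H}_0 \psi = \widehat{H}_2 \psi = \widehat{H}_4 \psi = 0$ form the one-dimensional space spanned by $\sigma(z,0)$, fix the constant by $\psi(1,0,\ldots,0)=1$, and finish with Theorem 2.6 of \cite{Nonhol}. The reductions and the normalization are correct. But the core of the theorem --- the uniqueness statement, which you yourself call ``the entire content'' --- is not proved in your sketch, and the obstacle you flag at the end is a genuine gap, not a routine verification. Your triangularity claim amounts to asserting that the two maps $T = \sum_s (2s-1) z_{2s-1}\partial_{2s+1}$ (from $\widehat{H}_2$) and $S = \sum_s (2s-1) z_{2s-1}\partial_{2s+3}$ (from $\widehat{H}_4$) are \emph{jointly} injective on each level component, so that the level-$(\ell+1)$ part of $\psi$ is determined by the level-$\ell$ part. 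This is a nontrivial linear-algebra assertion which your $g=1,2$ checks cannot detect: $T$ alone first acquires a kernel at $g=3$, where $T(3z_3^2 - 2z_1 z_5) = 0$, and for general $g$ one must prove that $S$ is injective on $\ker T$ in every level component. Nothing in your sketch does this, and it is not formal.

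The paper avoids this claim altogether, and that is the idea missing from your proposal. By Corollary \ref{t345} one has $-2(k-2)\widehat{H}_{2k} = [\widehat{H}_2, \widehat{H}_{2k-2}]$, so any common solution of the three equations is annihilated by \emph{all} the operators $\widehat{H}_{2k}$, $k = 0,1,\ldots,2g-1$, whose rational limits are computed in Lemma \ref{lem522}: $\widehat{H}_{2k} = \frac{1}{2}\sum_{s}\partial_{2s-1}\partial_{2k+1-2s} + \sum_s (2s-1) z_{2s-1}\partial_{2s+2k-1}$. The proof of Lemma \ref{lem54} then runs a double induction --- on the total degree $i_1 + i_3 + \cdots + i_{2g-1}$ (equivalently, your level), and within a fixed degree on the largest index carrying a nonzero exponent --- in which the equation $\widehat{H}_{2(g-k)} m_g = 0$, read off at the monomial $z_1^{i_1+1} z_3^{i_3}\cdots z_{2g-2k+1}^{i_{2g-2k+1}-1}$, isolates the \emph{single} unknown $i_{2g-2k+1}\, a(i_1,\ldots,i_{2g-2k+1},0,\ldots,0)$ through the term $z_1 \partial_{2g-2k+1}$, with a nonzero integer multiplier; all other contributions are coefficients already known from earlier steps. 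Since each step solves for exactly one coefficient, no injectivity lemma for the pair $(T,S)$ is ever needed. To complete your argument you must either prove your joint-injectivity claim for all $g$ and all levels, or adopt the paper's device of generating the whole family $\widehat{H}_{2k}$ by commutators and using their $z_1\partial_{2g-2k+1}$ terms; as written, the proposal is a correct reduction plus an unproven key step.
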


Let us note that for the sigma function $\sigma(z, \lambda)$ of the non-singular curve $\mathcal{V}_\lambda$ on the Jacobian variety of the curve $\mathcal{V}_\lambda$
the Abelian function
\[
\wp_{1,1}(z, \lambda) = - \frac{\partial^2}{\partial z_1^2}\ln \sigma(z, \lambda)
\]
determines a solution of the form $u(z) = 2 \wp_{1,1}(z, \lambda)$ of the Korteweg--de Vries (KdV) hierarchy, see \cite{BEL-97}. In \cite{BS} 
the inverse problem was posed and solved. Namely, in the notation of \cite{BS}, the equation 
\begin{equation}\label{eBS}
 2 \partial_x^2 \log f = - u
 \end{equation}
where $u$ is a solution of the stationary $g$-KdV equation, is considered. The problem of~supplementing \eqref{eBS} with natural conditions so that it has a unique solution is~solved. 
This problem is deeply connected with the problem of expressing the sigma function in~terms of tau functions \cite{Nak-10-AJM}. In \cite{KL} tau functions are introduced using the equation~\eqref{eBS}.

In \cite{AM} the so-called Adler--Moser polynomials were introduced. They give solutions to the stationary $g$-KdV hierarchy. 
The construction of these polynomials uses a recurrent sequence of inhomogeneous first-order differential equations with polynomial coefficients, that was introduced by Burchnall and Chaundy in \cite{B-Ch}. The remarkable property of this sequence is that it has polynomial solutions. These solutions naturally arise in a number of problems, see \cite{VW}, and are called Burchnall--Chaundy polynomials. 

The rational limit 
$\sigma(z,0)$ determines as well a solution $-2 \partial_1^2 \ln \sigma(z, 0)$ of the KdV hierarchy
(the proof uses that all the coefficients $\sigma_I(\lambda)$
of the series $\sigma(z, \lambda)$ are polynomials in $\lambda$). This naturally leads to the problem to describe the relation between the polynomial $\sigma(z,0)$ and Adler--Moser polynomials \cite{AM}. We give a solution to this problem  more precise than the result presented in \cite{Rat}.

The work is organized as follows: 

In Section \ref{S3} we give explicit formulas for the operators $H_{2k}$ that were found in~\cite{BB20}.

In~Section \ref{S4} we give the corresponding examples in the cases $g = 1,2,3$, and $4$.

In Section \ref{S5} we give all the steps to obtain Theorem \ref{T12}.
For $k = 0, 1, \ldots, 2g-1$ we~set
\[
\widehat{H}_{2k} = H_{2k}|_{\lambda = 0}.
\]
These second-order linear differential operators act on functions in $z = (z_1, \ldots, z_{2g-1})$. A direct verification shows that the operators $-\widehat{H}_{2k}$ determine 
a representation of the Lie--Witt algebra, namely, $[\widehat{H}_{2},\widehat{H}_{2k-2}] = 2(k-2)\widehat{H}_{2k}$.

The main result of Section \ref{S5} is given in Theorem \ref{ts5}:

For each genus $g$ any polynomial solution $\psi(z)$ of the system 
\begin{align} \label{H3}
\widehat{H}_0 \psi(z) &= 0, & \widehat{H}_2 \psi(z) &= 0, & \widehat{H}_4 \psi(z) &= 0
\end{align}
coincides with the rational limit of the sigma function up to a multiplicative constant.

In Section \ref{S6} we introduce differential operators  $A_{2k}$ with $k\geqslant 0$. These operators act on the ring of functions of an infinite number of variables  $z_1,z_3,\ldots,z_{2k-1},\ldots$. Note that all 
$A_{2k}$ for $k>0$ are operators of second order. Directly from the formulas for  $A_{2k},\, k\geqslant 0$,
it follows that if $z_{2s-1}=0$ and $\partial_{2s-1}=0$ for $s>g$, then $A_{2k} = - \widehat{H}_{2k}$.
We show that the Lie algebra with generators $A_{2k},\, k\geqslant 0$, over the field of rational numbers $\mathbb{Q}$
coincides with the Lie--Witt algebra $W_0$, generated by the operators $A_0,\,A_2,\,A_4$.

In Section \ref{S7} we consider the problem of constructing the Lie algebra of derivations of~$\mathcal{F}$. This Lie algebra
has~$3g$ generators $\mathcal{L}_{2k-1}$, where $k=1,\ldots,g$, and
$\mathcal{L}_{2k}$, where $k=0,\ldots,2g-1$. Here $\mathcal{L}_{2k-1} = \partial / \partial z_{2k-1}$.
The operators $\mathcal{L}_{2k}$ include as summands the differentiation operators~$L_{2k}$ over the parameters $\lambda$.
In general form, the method of~construction of the operators $\mathcal{L}_{2k}$ is given \cite{BL0}, \cite{BL}.
Here we specify the explicit form of the operators $\mathcal{L}_0,\,\mathcal{L}_2,\,\mathcal{L}_4$ for all $g\geqslant 1$.
The cases $g=1,2,3,4$ are considered in detail.

In Section \ref{S8} we describe the construction of polynomial dynamical systems that correspond to the differential operators $\mathcal{L}_{2k}$. We focus on the relation with the KdV equation. 

In Section \ref{S9} we give examples of such systems in the genus $g=3$ case.

In Section \ref{SL} we describe the connection of polynomial solutions of the system \eqref{H3} with Burchnall--Chaundy polynomials \cite{B-Ch}, \cite{VW} and Adler--Moser polynomials \cite{AM}, \cite{Rat}.

\section{Explicit expressions for Shr\"odinger operators\\ that determine hyperelliptic sigma functions}

\label{S3}

The construction of the operators $Q_{2k}$ in \cite{Nonhol} uses the condition
(see equation (1.3) in~\cite{Nonhol})
 stating that the commutator of operators $[Q_{2i},Q_{2j}]$ is determined by a formula over $P$ with the same coefficients as the formula for $[L_{2i}, L_{2j}]$, 
namely, the Lie algebra generated by the operators
$Q_{2i}$
with
$i=0, 1, ...$
is yet another realization of Lie--Witt algebra $W_0$ deformation.
For an effective description of this Lie algebra, one needs to obtain explicit formulas for $Q_0, Q_2$, and $Q_4$. These formulas were found in~\cite{BB20}. Provided $Q_{2k} = L_{2k} - H_{2k}$, we have:
\begin{align*}
H_0 &= \sum_{s=1}^g (2s-1) z_{2s-1} \partial_{2s-1} - {g (g+1) \over 2};\\
H_2 &= {1 \over 2} \partial_1^2 + \sum_{s=1}^{g-1} (2s-1) z_{2s-1} \partial_{2s+1}
- {4 \over 2 g + 1} \lambda_4  \sum_{s=1}^{g-1} (g - s)  z_{2s+1} \partial_{2s-1}
+ \\ & \qquad \qquad +\sum_{s=1}^{g} \left({2s - 1 \over 2} \lambda_{4s} - {2 (g - s + 1)  \over 2 g+1} \lambda_{4} \lambda_{4s - 4} \right) z_{2s-1}^2;
\\
H_4 &= \partial_1 \partial_3 + \sum_{s=1}^{g-2} (2s-1) z_{2s-1} \partial_{2s+3}
+ \lambda_4 \sum_{s=1}^{g-1} (2s-1) z_{2s+1} \partial_{2s+1} - \\
& - {6 \over 2 g + 1} \lambda_6 \sum_{s=1}^{g-1} (g - s) z_{2s+1} \partial_{2s-1} 
+ \sum_{s=1}^{g}\left( (2 s - 1) \lambda_{4 s +2} - {3 (g - s + 1) \over 2 g + 1} \lambda_6 \lambda_{4 s - 4} \right) z_{2 s-1}^2 +\\
& + \sum_{s=1}^{g-1} (2 s - 1) \lambda_{4s+4} z_{2s-1} z_{2s+1} - {g (g-1) \over 2} \lambda_4.
\end{align*}
Here~$\lambda_s = 0$ for all $s \notin \{0,4,6, \ldots, 4 g, 4 g + 2\}$ and $\lambda_0 = 1$.

\begin{lem}[Lemmas 3.1 and 4.2 in \cite{BB20}] \label{L21}
For Schr\"odinger operators, in~\eqref{e2} we have
\begin{align*}
&\alpha_{a,b}^{(k)}(\lambda) = 1, \quad \text{if} \quad a+b = 2 k, \quad \text{and} \quad  a, b \in 2 \mathbb{N} + 1,\\
&\alpha_{a,b}^{(k)}(\lambda) = 0, \quad \text{if} \quad a+b \ne 2 k, \quad \text{and} \quad  a, b \in 2 \mathbb{N} + 1,\\
&\beta_{a,b}^{(k)}(\lambda) \text{ is a linear function in } \lambda,\\
&\gamma_{a,b}^{(k)}(\lambda) \text{ is a quadratic function in } \lambda, \\
&\delta^{(k)}(\lambda) = \left(- {1 \over 4} (2g-k+1) (2g-k) + {1 \over 2} \left( g + \left[ {k+1 \over 2} \right] - k \right) \left( g - \left[ {k+1 \over 2} \right]\right) \right) \lambda_{2k}.
\end{align*}
\end{lem}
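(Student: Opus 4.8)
The plan is to prove the four assertions simultaneously by induction on $k$, taking the explicit operators $H_0,H_2,H_4$ as the base of the induction and propagating the structure with the commutator identity for $[L_2,L_{2k}]$ recorded above. Since the operators $Q_{2k}=L_{2k}-H_{2k}$ satisfy the same relations over $P$ as the $L_{2k}$, and since each $L_{2k}$ is a vector field in $\lambda$ alone (hence commutes with all $z_a$ and $\partial_a$), the relation $[L_2,L_{2k}]=2(k-1)L_{2k+2}+\tfrac{4(2g-k)}{2g+1}(\lambda_{2k+2}L_0-\lambda_4L_{2k-2})$ projects, on the part that is a differential operator in $(z,\partial)$ with $\lambda$-coefficients, to
\[
H_{2k+2}=\frac{1}{2(k-1)}\Big([L_2,H_{2k}]-[L_{2k},H_2]-[H_2,H_{2k}]-\tfrac{4(2g-k)}{2g+1}\big(\lambda_{2k+2}H_0-\lambda_4H_{2k-2}\big)\Big),\qquad k\geqslant2 .
\]
Here $[L_2,H_{2k}]$ and $[L_{2k},H_2]$ denote the operators obtained by letting the vector fields $L_2$, $L_{2k}$ act on the $\lambda$-coefficients of $H_{2k}$, $H_2$ (so they keep the order in $(z,\partial)$ unchanged), while $[H_2,H_{2k}]$ is a bracket of two order-two elements of the Weyl algebra in $(z_a,\partial_a)$ and is therefore again of order at most two. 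Hence the right-hand side has exactly the shape \eqref{e2}, and everything reduces to controlling its coefficients.

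It is convenient to encode the four claims by a single auxiliary grading. Assign $D(\lambda_{2j})=1$ for $j\geqslant2$, $D(\lambda_0)=0$, $D(z_a)=-\tfrac12$ and $D(\partial_a)=+\tfrac12$. A direct check on the four types of summands in \eqref{e2} shows that the statement ``$\alpha_{a,b}^{(k)}$ is constant, $\beta_{a,b}^{(k)}$ is linear, $\gamma_{a,b}^{(k)}$ is quadratic, and $\delta^{(k)}$ is linear'' is \emph{equivalent} to the single requirement that every monomial of $H_{2k}$ have $D$-degree at most $1$. Combined with weight homogeneity --- $\wt H_{2k}=2k$, $\wt z_a=-a$, $\wt\partial_a=a$, $\wt\lambda_{2j}=2j$ --- this already yields most of the lemma without computation: the coefficient $\alpha_{a,b}^{(k)}$ has weight $2k-a-b$, so it vanishes when $a+b>2k$ and is a weight-zero constant when $a+b=2k$; and $\delta^{(k)}$, having weight $2k$ and $D$-degree at most $1$, must be a scalar multiple $d_k\lambda_{2k}$.

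Two things then remain: to fix the surviving constants and to propagate the bound $D\leqslant1$. The constants are pinned down by extracting two pieces of the recursion. Taking its pure second-order part and using $[\partial_a\partial_b,z_c\partial_d]=\delta_{ac}\partial_b\partial_d+\delta_{bc}\partial_a\partial_d$, one finds that the symbol $\tfrac12\sum_{a+b=2k}\partial_a\partial_b$ is carried to $\tfrac12\sum_{a+b=2k+2}\partial_a\partial_b$ with coefficient exactly $1$ (the contributions of $L_2$, $L_{2k}$ vanish because $\alpha$ is already constant, and only the $\lambda$-independent parts of the coefficients $\beta$ feed the constant symbol), which gives $\alpha_{a,b}^{(k+1)}=1$ for $a+b=2k+2$. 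Taking the $z$-free part and using $[\partial_a\partial_b,z_cz_d]\big|_{z\text{-free}}=\delta_{ac}\delta_{bd}+\delta_{ad}\delta_{bc}$, one obtains a scalar recursion for $d_{k+1}$ in terms of $d_k$, $d_{k-1}$ and $\gamma^{(k)}_{1,1}$; the term $\tfrac12\sum_{a+b=2k}\gamma^{(1)}_{a,b}$, which is nonzero only when $k$ is odd, is precisely what produces the integer part $[\tfrac{k+1}{2}]$ in the closed form. The base cases $k=0,1,2$ are read off from $H_0,H_2,H_4$ (giving $\delta^{(0)}=-\tfrac{g(g+1)}2$, $\delta^{(1)}=0$, $\delta^{(2)}=-\tfrac{g(g-1)}2\lambda_4$), and the stated formula for $\delta^{(k)}$ is then confirmed by solving this recursion.

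The main obstacle is the propagation of $D\leqslant1$, which does \emph{not} hold term by term. Because the entries $v_{4,2m}(\lambda)$ of $L_2$ contain a $D$-degree-two part $-\tfrac{4(2g-m+1)}{2g+1}\lambda_4\lambda_{2m}$ alongside the $D$-degree-one part $2(m+2)\lambda_{2m+4}$, the term $[L_2,H_{2k}]$ a priori creates monomials of $D$-degree $2$; so does $\lambda_4H_{2k-2}$, and so does $[H_2,H_{2k}]$. The content of the lemma is that all these $D$-degree-two contributions cancel in the combination on the right-hand side of the recursion. Establishing this cancellation is the crux, and it is where the precise structure constants $\tfrac{4(2g-k)}{2g+1}$ and the explicit matrix $V(\lambda)$ enter decisively; this is exactly the computation carried out in Lemmas 3.1 and 4.2 of \cite{BB20}. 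I would organize it by decomposing every operator into its $D$-homogeneous components and verifying that the $D$-degree-two part of the right-hand side vanishes identically, using the Jacobi identity for $\mathscr{L}_L$ as an independent consistency check on the bookkeeping.
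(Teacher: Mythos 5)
First, a contextual note: the paper does not prove Lemma~\ref{L21} at all --- it imports the statement verbatim from \cite{BB20}. So your plan, to rederive the structure by induction from the explicit $H_0,H_2,H_4$ of Section~\ref{S3} and the recursion coming from Corollary~\ref{t345}, is a legitimate route, and much of your scaffolding is correct: the recursion for $H_{2k+2}$ does follow from $Q_{2k}=L_{2k}-H_{2k}$ and the requirement that the $Q$'s satisfy the same structure relations as the $L$'s; the encoding of the four degree claims as the single bound ``every monomial of $H_{2k}$ has $D$-degree at most $1$'' checks out; weight homogeneity together with that bound does force $\alpha^{(k)}_{a,b}=0$ for $a+b\ne 2k$ and $\delta^{(k)}=d_k\lambda_{2k}$; and your base values $\delta^{(0)}=-\tfrac12 g(g+1)$, $\delta^{(1)}=0$, $\delta^{(2)}=-\tfrac12 g(g-1)\lambda_4$ agree with the closed formula. (The $\lambda=0$ part of your induction is in effect carried out in the paper itself: Lemma~\ref{lem61} gives the Witt relations for $A_{2k}=-\widehat H_{2k}$, which is exactly what propagates $\alpha^{(k)}_{a,b}=1$ on $a+b=2k$.)

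The genuine gap sits at the point you yourself call the crux. The entire content of ``$\beta$ linear, $\gamma$ quadratic, $\delta$ linear'' is that the $D$-degree-two terms produced by $[L_2,H_{2k}]$ (via the quadratic part of $v_{4,2m}$), by $[H_2,H_{2k}]$, and by $\lambda_4 H_{2k-2}$ cancel on the right-hand side of your recursion. You do not prove this cancellation; you state that it ``is exactly the computation carried out in Lemmas 3.1 and 4.2 of \cite{BB20}'' --- but those lemmas are precisely the statement under proof, so at its decisive step the argument is circular, and nothing in your text excludes, say, a quadratic $\beta^{(k+1)}$. The same incompleteness affects the exact formula for $\delta^{(k)}$: the scalar recursion for $d_{k+1}$ is described in words but never written out or solved, so the floor-function expression is asserted rather than derived. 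Finally, a small but real imprecision: $[H_2,H_{2k}]$ has order at most two not because it is ``a bracket of two order-two elements of the Weyl algebra'' (generic second-order operators have a third-order commutator), but because both operators are constants plus homogeneous quadratic forms in $(z_a,\partial_a)$, and that space is closed under bracket; this closure, not order counting, is what entitles you to say the right-hand side of the recursion has the shape \eqref{e2}.
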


\begin{cor}[Corollary 4.4 in \cite{BB20}] \label{t345}
For $k = 3, 4, 5, \ldots, 2g-1$,
\[
Q_{2k} = {1 \over 2 (k-2)} [Q_2, Q_{2k-2}] - {2 (2 g - k + 1) \over (k-2) (2 g + 1)} \left( \lambda_{2k} Q_0 - \lambda_4 Q_{2k-4}\right). 
\]
\end{cor}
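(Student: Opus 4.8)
The plan is to obtain the formula purely algebraically from the deformed Lie--Witt commutation relation, transported from the $L_{2k}$ to the $Q_{2k}$. The construction of the operators $Q_{2k}$ in \cite{Nonhol} is arranged precisely so that the commutator $[Q_{2i},Q_{2j}]$ is given over $P$ by the same structure coefficients as $[L_{2i},L_{2j}]$; this transfer principle is the heart of the matter. First I would invoke it on the relation recorded above for $[L_2,L_{2k}]$ (Lemma 3.3 in \cite{BB20}), which then reads verbatim for the Schr\"odinger operators:
\[
[Q_2, Q_{2k}] = 2(k-1)\, Q_{2k+2} + {4(2g-k) \over 2g+1}\left(\lambda_{2k+2} Q_0 - \lambda_4 Q_{2k-2}\right).
\]
Everything that follows is bookkeeping.

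Next I would shift the index by replacing $k$ with $k-1$ throughout, which gives
\[
[Q_2, Q_{2k-2}] = 2(k-2)\, Q_{2k} + {4(2g-k+1) \over 2g+1}\left(\lambda_{2k} Q_0 - \lambda_4 Q_{2k-4}\right).
\]
Solving for $Q_{2k}$ requires dividing by $2(k-2)$, which is legitimate exactly for $k \geqslant 3$, i.e. precisely the range asserted in the corollary. Rearranging and using ${4 \over 2(k-2)} = {2 \over k-2}$ then reproduces the claimed identity
\[
Q_{2k} = {1 \over 2(k-2)}[Q_2, Q_{2k-2}] - {2(2g-k+1) \over (k-2)(2g+1)}\left(\lambda_{2k} Q_0 - \lambda_4 Q_{2k-4}\right).
\]

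The only genuine content sits in the first step, so the main obstacle is not the corollary itself but justifying the transfer principle: one must know that the explicit operators $H_0, H_2, H_4$ of Section \ref{S3} really yield a representation solving Problem \ref{pro1}, so that the bracket relation for the $L_{2k}$ carries over with identical coefficients to the $Q_{2k}$. That verification is exactly the content of Lemmas 3.1 and 4.2 of \cite{BB20}, which I would cite rather than redo. I would also remark on the recursive character of the result: it expresses each higher $Q_{2k}$ through $Q_0$, $Q_2$, and the two preceding operators $Q_{2k-2}$, $Q_{2k-4}$, which is what permits an inductive reconstruction of the entire algebra from its three generators $Q_0, Q_2, Q_4$.
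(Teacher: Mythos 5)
Your derivation is correct and is essentially the intended one: the paper takes this statement from Corollary 4.4 of \cite{BB20}, where it is obtained exactly as you do, by transferring the bracket relation $[L_2,L_{2k}]$ (Lemma 3.3 in \cite{BB20}) to the operators $Q_{2k}$ via the defining condition (equation (1.3) in \cite{Nonhol}) that the $Q$'s satisfy the same structure relations over $P$ as the $L$'s, then shifting $k \to k-1$ and dividing by $2(k-2)$. Your observation that the genuine content lies in justifying this transfer principle, which is settled by the explicit verification in Lemmas 3.1 and 4.2 of \cite{BB20}, matches the paper's treatment.
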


This relation recurrently defines the operators $Q_{2k}$ for $k = 3, 4, 5, \ldots, 2g-1$ and yields explicit expressions for these operators.

\section{Explicit formulas in the case of genus $g=1,2,3,4$} \label{S4}

In this section for illustration of the results of Section
\ref{S3} we give the  operators $H_0, H_2$ and $H_4$ for $g=1,2,3,4$. The explicit form of all the operators $H_{2k}$ for these $g$ is given in~\cite{BB20}. 

\subsection{The Schr\"odinger operators for the genus $g=1$} \text{}

In this case, the explicit formulas for $\{H_{2k}\}$ in \eqref{HQ} are
\begin{align*}
H_0 &= z_1 \partial_1 - 1; &
H_2 &= {1 \over 2} \partial_1^2 - {1 \over 6} \lambda_4 z_1^2.
\end{align*}

\subsection{The Schr\"odinger operators for the genus $g=2$} \text{}

In this case, the explicit formulas for $\{H_{2k}\}$ in \eqref{HQ} are
\begin{align*}
H_0 &= z_1 \partial_1 + 3 z_3 \partial_3 - 3;\\
H_2 &= {1 \over 2} \partial_1^2 - {4 \over 5} \lambda_4 z_3 \partial_1 + z_1 \partial_3 - {3 \over 10} \lambda_4 z_1^2 + \left({3 \over 2} \lambda_8 - {2 \over 5} \lambda_4^2\right) z_3^2;\\
H_4 &= \partial_1 \partial_3 - {6 \over 5} \lambda_6 z_3 \partial_1 + \lambda_4 z_3 \partial_3 - {1 \over 5} \lambda_6 z_1^2 + \lambda_8 z_1 z_3 + \left(3 \lambda_{10} - {3 \over 5} \lambda_4 \lambda_6\right) z_3^2 - \lambda_4.
\end{align*}

\subsection{The Schr\"odinger operators for the genus $g=3$} \text{}

In this case, the explicit formulas for $\{H_{2k}\}$ in \eqref{HQ} are 
\begin{align*}
H_0 &=
z_1 \partial_1 + 3 z_3 \partial_3 + 5 z_5 \partial_5 - 6;
\\
H_2 &= {1 \over 2} \partial_1^2 - {8 \over 7} \lambda_4 z_3 \partial_1 + \left(z_1 - {4 \over 7} \lambda_4 z_5\right) \partial_3 + 3 z_3 \partial_5 + \\
& - {5 \over 14} \lambda_4 z_1^2 + \left({3 \over 2} \lambda_8 - {4 \over 7} \lambda_4^2\right) z_3^2 +
 \left({5 \over 2} \lambda_{12} - {2 \over 7} \lambda_4 \lambda_8 \right) z_5^2;\\
H_4 &= \partial_1 \partial_3 - {12 \over 7} \lambda_6 z_3 \partial_1 + \left(\lambda_4 z_3 - {6 \over 7} \lambda_6 z_5\right) \partial_3 + \left(z_1 + 3 \lambda_4 z_5 \right) \partial_5 - {2 \over 7} \lambda_6 z_1^2 + \\
& + \lambda_8 z_1 z_3 
 + \left(3 \lambda_{10} - {6 \over 7} \lambda_4 \lambda_6\right) z_3^2 + 3 \lambda_{12} z_3 z_5 + \left(5  \lambda_{14} - {3 \over 7} \lambda_6 \lambda_8 \right) z_5^2
  - 3 \lambda_4.
\end{align*}

\subsection{The Schr\"odinger operators for the genus $g=4$} \text{}

In this case, the explicit formulas for $\{H_{2k}\}$ in \eqref{HQ} are 

\begin{align*}
H_0 &=
z_1 \partial_1 + 3 z_3 \partial_3 + 5 z_5 \partial_5 + 7 z_7 \partial_7 - 10;\\
H_2 &= {1 \over 2} \partial_1^2 + z_1 \partial_3 + 3 z_3 \partial_5 + 5 z_5 \partial_7 - {4 \over 9} \lambda_4 \left(3 z_{3} \partial_{1} + 2 z_{5} \partial_{3} + z_{7} \partial_{5} \right) - \\
& \qquad - {7 \over 18} \lambda_4 z_1^2 + \left({3 \over 2} \lambda_8 - {2 \over 3} \lambda_4^2 \right) z_3^2 + \left( {5 \over 2} \lambda_{12} - {4 \over 9} \lambda_4 \lambda_8 \right) z_5^2 + \left({7 \over 2} \lambda_{16} - {2 \over 9} \lambda_4 \lambda_{12}  \right) z_7^2;\\
H_4 &= \partial_1 \partial_3 + z_1 \partial_5 + 3 z_3 \partial_7 +
\lambda_4 \left( z_3 \partial_3 + 3 z_5 \partial_5 + 5 z_7 \partial_7 \right)
- {2 \over 3} \lambda_6 \left( 3 z_3 \partial_1 + 2 z_5 \partial_3 + z_7 \partial_5 \right)-  \\
& \qquad - {1 \over 3} \lambda_6 z_1^2 + \lambda_8 z_1 z_3 + \left(3 \lambda_{10} - \lambda_4 \lambda_6\right) z_3^2 + 3 \lambda_{12} z_3 z_5 + \left(5 \lambda_{14} - {2 \over 3} \lambda_6 \lambda_8\right) z_5^2 + \\
& \qquad + 5 \lambda_{16} z_5 z_7 + \left(7 \lambda_{18} - {1 \over 3} \lambda_6 \lambda_{12}\right) z_7^2 - 6 \lambda_4.
\end{align*}

\section{Adler--Moser polynomials and the rational limit of sigma functions} \label{S5}

Let $\psi(z, \lambda)$ be a solution of the system of~heat equations~\eqref{e3}. 
We have
\[
 L_{2k} \psi(z,\lambda) = H_{2k} \psi(z,\lambda)
\]
for $k = 0, 1, \ldots, 2g-1$. Note that from the form of the operators $L_{2k}$ it follows that for~$\lambda = 0$ we have
$L_{2k} \psi(z,0) = 0$ for all $k$. Thus, the equations
\[
H_{2k} \psi(z,0) = 0                                                                                                                                                                                                \]
hold for $k = 0, 1, \ldots, 2g-1$. 

By the \emph{rational limit} of the operators $H_{2k}$ we denote the operators
\[
\widehat{H}_{2k} = H_{2k}|_{\lambda = 0}. 
\]
From the form of the operators $H_{2k}$ it follows that 
\[
 (H_{2k}\psi)(z,0) = \widehat{H}_{2k}(\psi(z,0)).
\]
Therefore the function $\psi(z,0)$ is a solution of the system of \emph{rational limit heat equations}
\[
 \widehat{H}_{2k} \psi = 0,
\]
where $k = 0, 1, \ldots, 2g-1$. We consider its subsystem for $k = 0,1,2$.

For genus $g$ we denote by $m_g(z)$ a polynomial solution $m_g(z) = \psi(z)$ of the system 
\begin{align*}
\widehat{H}_0 \psi(z) &= 0, & \widehat{H}_2 \psi(z) &= 0, & \widehat{H}_4 \psi(z) &= 0
\end{align*}
with initial conditions $m_g(1,0,\ldots,0) = 1$.

\begin{ex} \label{ex1}
We have
$
m_1(z) = z_1.
$
\end{ex}

\begin{ex}
We have
$
m_2(z) = z_1^3 - 3 z_3.
$
\end{ex}

\begin{ex}
We have
$
m_3(z) = z_1^6 - 15 z_1^3 z_3  + 45 z_1 z_5 - 45 z_3^2.
$
\end{ex}

\begin{ex} \label{ex4}
We have
\[
m_4(z) = z_1^{10} - 45 z_1^7 z_3 + 315 z_1^5 z_5 - 1575 z_1^3 z_7 + 4725 z_1^2 z_3 z_5 - 4725 z_1 z_3^3 + 4725 z_3 z_7 - 4725 z_5^2.
\]
\end{ex}

\begin{ex} \label{ex5}
We have
\begin{multline*}
m_5(z) = z_1^{15} - 105 z_1^{12} z_3 + 1260 z_1^{10} z_5 + 1575 z_1^9 z_3^2 - 14175 z_1^8 z_7 + 14175 z_1^7 z_3 z_5 - \\ - 33075 z_1^6 z_3^3 + 99225 z_1^6 z_9 - 297675 z_1^5 z_3 z_7 - 297675 z_1^5 z_5^2 + \\ + 1488375 z_1^4 z_3^2 z_5
- 992250 z_1^3 z_3^4 - 1488375 z_1^3 z_3 z_9 + 1488375 z_1^3 z_5 z_7 + \\ + 4465125 z_1^2 z_3^2 z_7 - 4465125 z_1^2 z_3 z_5^2 - 1488375 z_1 z_3^3 z_5 + 4465125 z_1 z_5 z_9 - 4465125 z_1 z_7^2 + \\
+ 1488375 z_3^5 - 4465125 z_3^2 z_9 + 8930250 z_3 z_5 z_7 - 4465125 z_5^3. 
\end{multline*}
\end{ex}

We will further show that the polynomials $m_g(z)$ satisfy the differential equation
\[
m_{g+1}' m_{g-1} - m_{g+1} m_{g-1}' = (2g+1) m_g^2,
\]
where the prime denotes the derivative in $z_1$ and $m_0(z) = 1$.
This coincides with the description of the polynomials $\theta_k$ in \cite{AM}, that were later called Adler--Moser polynomials.
Their second logarithmic derivatives in $z_1$ give rational solutions of the Korteweg--de Vries equation. See Section \ref{SL} for more detailed definitions and results. 

\begin{rem}
In \cite{AM} the variables of the polynomials $\theta_k(\tau_1, \ldots, \tau_k)$ are determined by $\tau_1 = z_1$ and the normalization that the coefficient of $z_1^{k (k-1)/2}$ in $\theta_{k+1}$ is equal to $\tau_{k+1}$. From the Examples \ref{ex1}--\ref{ex5} we obtain for $g=1,2,3,4,5$:
\[
 m_g(z) = \theta_g(\tau_1, \ldots, \tau_g),
\]
where 
$\tau_1 = z_1$, $\tau_2 = - 3 z_3$, $\tau_3 = 45 z_5$,  $\tau_4 = - 1575 z_7$, $\tau_5 = - 33075 z_3^3 + 99225 z_9$.
\end{rem}

\begin{lem} \label{lem52} We have the rational limits
\begin{align*}
\widehat{H}_0 &= - {g (g+1) \over 2} + \sum_{s=1}^g (2s-1) z_{2s-1} \partial_{2s-1};\\
\widehat{H}_2 &= {1 \over 2} \partial_1^2 + \sum_{s=1}^{g-1} (2s-1) z_{2s-1} \partial_{2s+1};\\
\widehat{H}_4 &= \partial_1 \partial_3 + \sum_{s=1}^{g-2} (2s-1) z_{2s-1} \partial_{2s+3}.
\end{align*}
\end{lem}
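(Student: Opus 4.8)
The plan is to prove the lemma by directly setting $\lambda = 0$ in the closed-form expressions for $H_0$, $H_2$, and $H_4$ recorded at the beginning of Section~\ref{S3}, and then reading off the surviving terms. First I would recall the standing convention $\lambda_0 = 1$ and $\lambda_s = 0$ for $s \notin \{0,4,6,\ldots,4g,4g+2\}$; passing to the rational limit means in addition setting $\lambda_s = 0$ for every $s \geq 4$ while retaining $\lambda_0 = 1$. The guiding principle is that a summand survives precisely when it carries no factor of a positive-weight parameter $\lambda_s$ with $s\geq 4$, so the whole argument reduces to inspecting each summand for such a factor.

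For $\widehat{H}_0$ there is nothing to do, since the expression for $H_0$ involves no $\lambda$ whatsoever; hence $\widehat{H}_0 = H_0$. For $\widehat{H}_2$ I would note that the two summands ${1\over 2}\partial_1^2$ and $\sum_{s=1}^{g-1}(2s-1) z_{2s-1}\partial_{2s+1}$ are $\lambda$-free and so persist, whereas the transport term proportional to $\lambda_4$ and the entire quadratic sum $\sum_{s=1}^{g}\bigl({2s-1\over 2}\lambda_{4s} - {2(g-s+1)\over 2g+1}\lambda_4\lambda_{4s-4}\bigr) z_{2s-1}^2$ disappear, the latter because each of its monomials is either a positive-weight $\lambda_{4s}$ or carries the factor $\lambda_4$. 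The treatment of $\widehat{H}_4$ is identical in spirit: $\partial_1\partial_3$ and $\sum_{s=1}^{g-2}(2s-1) z_{2s-1}\partial_{2s+3}$ survive, while the two transport terms (proportional to $\lambda_4$ and $\lambda_6$), both quadratic sums, the mixed sum $\sum_{s=1}^{g-1}(2s-1)\lambda_{4s+4}z_{2s-1}z_{2s+1}$, and the constant $-{g(g-1)\over2}\lambda_4$ all vanish.

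I do not expect a genuine obstacle here, as the proof is pure bookkeeping; the one point that warrants care is the $s=1$ term of each quadratic coefficient, where $\lambda_{4s-4}$ becomes $\lambda_0 = 1$. There I would check explicitly that the monomial still retains a positive-weight factor — for $H_2$ the $s=1$ coefficient collapses to $\bigl({1\over2} - {2g\over 2g+1}\bigr)\lambda_4$, a nonzero multiple of $\lambda_4$, and for $H_4$ the analogous coefficient is a multiple of $\lambda_6$ — so that the normalization $\lambda_0 = 1$ never accidentally promotes a term into the rational limit. This confirms that the surviving operators are exactly those displayed in the statement.
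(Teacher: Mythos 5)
Your proposal is correct and matches the paper's proof, which simply states that the lemma is a direct corollary of the explicit formulas for $H_0$, $H_2$, $H_4$ in Section~\ref{S3}; you spell out the same substitution $\lambda = 0$ term by term. Your extra check that the $s=1$ terms involving $\lambda_{4s-4} = \lambda_0 = 1$ still carry a factor of $\lambda_4$ (resp.\ $\lambda_6$) and hence vanish is exactly the one bookkeeping point worth verifying, and you handle it correctly.
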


This lemma is a direct corollary from the formulas of Section \ref{S3}.

From Corollary \ref{t345} for $k = 3, 4, 5, \ldots, 2g-1$ we obtain
\begin{equation} \label{eq9}
- 2 (k-2) \widehat{H}_{2k} = [\widehat{H}_2, \widehat{H}_{2k-2}].
\end{equation}

\begin{lem} \label{lem522} We have the rational limits
\[ 
\widehat{H}_{2k} = {1 \over 2} \sum_{s=1}^{k} \partial_{2s-1} \partial_{2k+1-2s} + \sum_{s=1}^{g-k} (2s-1) z_{2s-1} \partial_{2s+2k-1}
\]
for $k = 1, 2, 3, 4, 5, \ldots, 2g-1$.
Here we set $\partial_s = 0$ if $s > 2g-1$.
\end{lem}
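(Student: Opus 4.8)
The plan is to argue by induction on $k$, taking $k=1$ and $k=2$ as the base and using the recurrence \eqref{eq9} for the inductive step. For $k=1,2$ the asserted formula is exactly the expression for $\widehat{H}_2$ and $\widehat{H}_4$ recorded in Lemma~\ref{lem52}, once one writes $\partial_1^2 = \partial_1\partial_1$ and $2\partial_1\partial_3 = \partial_1\partial_3 + \partial_3\partial_1$ to match the symmetric sums; so the base of the induction is immediate. For the inductive step I would assume the formula for $\widehat{H}_{2k-2}$ and use \eqref{eq9} in the form $\widehat{H}_{2k} = -\frac{1}{2(k-2)}[\widehat{H}_2,\widehat{H}_{2k-2}]$, which is legitimate since $k-2\neq 0$ for $k\geqslant 3$.

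It is convenient to split each operator into its constant-coefficient second-order part and its first-order shift part: write $\widehat{H}_{2k} = P_{2k} + R_{2k}$, where
\[
P_{2k} = \frac12 \sum_{s=1}^{k}\partial_{2s-1}\partial_{2k+1-2s}, \qquad R_{2k} = \sum_{s=1}^{g-k}(2s-1)z_{2s-1}\partial_{2s+2k-1}.
\]
Expanding the commutator bilinearly produces four terms, $[P_2,P_{2k-2}] + [P_2,R_{2k-2}] + [R_2,P_{2k-2}] + [R_2,R_{2k-2}]$. The first vanishes because $P_2=\frac12\partial_1^2$ and $P_{2k-2}$ both have constant coefficients. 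The last term is a commutator of two shift operators; using $[z_p\partial_q, z_r\partial_w] = \delta_{q,r}z_p\partial_w - \delta_{w,p}z_r\partial_q$ the double sum collapses to a single shift, and the two surviving contributions carry coefficients differing by $(2s+1)-(2s+2k-3)=4-2k$, so that $[R_2,R_{2k-2}] = -2(k-2)R_{2k}$. This already matches the first-order part of $-2(k-2)\widehat{H}_{2k}$.

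The two mixed terms produce the second-order part. Using $[\partial_1^2, z_j] = 2\delta_{j,1}\partial_1$ and $[z_p\partial_q,\partial_a\partial_b] = -\delta_{p,a}\partial_q\partial_b - \delta_{p,b}\partial_a\partial_q$, one finds after reindexing that
\[
[P_2, R_{2k-2}] + [R_2, P_{2k-2}] = -\sum_{u=1}^{k}(2u-3)\,\partial_{2u-1}\partial_{2k+1-2u}.
\]
The crux is to recognize this as $-2(k-2)P_{2k}$: since $\partial_{2u-1}\partial_{2k+1-2u}$ is invariant under $u\mapsto k+1-u$, one may symmetrize the coefficient, replacing $2u-3$ by $\frac12\bigl[(2u-3)+(2(k+1-u)-3)\bigr] = k-2$, which yields exactly $-(k-2)\sum_{u=1}^{k}\partial_{2u-1}\partial_{2k+1-2u} = -2(k-2)P_{2k}$. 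Combining the two parts gives $[\widehat{H}_2,\widehat{H}_{2k-2}] = -2(k-2)(P_{2k}+R_{2k}) = -2(k-2)\widehat{H}_{2k}$, which is precisely \eqref{eq9}, completing the induction.

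The main obstacle I anticipate is purely bookkeeping: keeping the summation ranges consistent through the reindexings, in particular where a shifted index leaves the range $1,\ldots,2g-1$. I would handle this uniformly with the stated convention $\partial_s = 0$ for $s>2g-1$, so that each single sum may be written over its natural range and truncates automatically; the indices appearing in $P_{2k}$ and in the mixed-term identity remain positive throughout, so no separate boundary analysis is needed and the symmetrization step is valid term by term.
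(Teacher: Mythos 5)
Your proof is correct and follows exactly the paper's route: base cases $k=1,2$ from Lemma~\ref{lem52}, then induction via the relation \eqref{eq9}, with the convention $\partial_s=0$ for $s>2g-1$ handling the truncation of ranges. The only difference is that you carry out the commutator computation explicitly (correctly, including the symmetrization replacing $2u-3$ by $k-2$), whereas the paper leaves this verification implicit in its two-line proof; essentially the same calculation appears in the paper as Lemma~\ref{lem61} for the operators $A_{2k}$.
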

\begin{proof}
For $k = 1$ and $k=2$ this formula holds from Lemma \ref{lem52}. For $k = 3, 4, 5, \ldots, 2g-1$ it follows from~\eqref{eq9}.
\end{proof}

This lemma leads us to the operators that we will consider in the next section of this work.

\begin{lem} \label{lem54}
The conditions
\[
 \widehat{H}_0 m_g(z) = \widehat{H}_2 m_g(z) = \widehat{H}_4 m_g(z) = 0
\]
and $m_g(1,0,\ldots,0) = 1$ determine the polynomial $m_g(z)$ uniquely.
\end{lem}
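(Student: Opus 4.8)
The statement has two parts, but the substance is uniqueness: existence is already furnished by the rational limit $\sigma(z,0)$, which is weighted-homogeneous of weighted degree $N=\tfrac12 g(g+1)$, satisfies $\widehat H_{2k}\sigma(z,0)=0$, and has $z_1^N$-coefficient equal to $1$. So the plan is to show that the three equations together with the normalization admit at most one solution. First I would record two reductions. The equation $\widehat H_0 m=0$ (Lemma \ref{lem52}) says exactly that $m$ is weighted-homogeneous of weighted degree $N$, where $z_{2s-1}$ carries degree $2s-1$; this confines $m$ to the finite-dimensional space $V_N$ of such polynomials. Next, since the identity \eqref{eq9} expresses every $\widehat H_{2k}$ with $k\geq 3$ as an iterated commutator of $\widehat H_2$ and $\widehat H_4$, the two equations $\widehat H_2 m=\widehat H_4 m=0$ already force $\widehat H_{2k}m=0$ for all $k\geq 1$: indeed $[\widehat H_2,\widehat H_{2k-2}]m=\widehat H_2(\widehat H_{2k-2}m)-\widehat H_{2k-2}(\widehat H_2 m)$ vanishes by induction, so a solution of the three displayed equations solves the whole rational-limit hierarchy.

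The core idea is a second grading. By Lemma \ref{lem522} each operator splits as $\widehat H_{2k}=D_{2k}+R_{2k}$, where the pure-derivative part $D_{2k}=\tfrac12\sum_{s=1}^k\partial_{2s-1}\partial_{2k+1-2s}$ lowers the total degree $|I|=\sum_s i_{2s-1}$ by $2$, while the shift part $R_{2k}=\sum_{s=1}^{g-k}(2s-1)z_{2s-1}\partial_{2s+2k-1}$ preserves it. Writing $m=\sum_e m_e$ as a sum of its total-degree-$e$ homogeneous parts (all of weighted degree $N$) and collecting $\widehat H_{2k}m=0$ by total degree yields, for every $k$ and every $e$, the triangular relation
\[
R_{2k}\,m_e = -\,D_{2k}\,m_{e+2}.
\]
The maximal total degree occurring in $V_N$ is $N$, attained only by $z_1^N$, so the top part is $m_N=c\,z_1^N$; evaluating $m(1,0,\dots,0)=1$ gives $c=1$, hence $m_N=z_1^N$ is forced. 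The relation then determines each lower $m_e$ from the already-known $m_{e+2}$, provided the map $m_e\mapsto (R_{2k}m_e)_k$ is injective on the lower strata.

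This injectivity is the one genuine point, and the step I expect to be the crux; I would isolate it as a lemma: the only weighted-homogeneous polynomials annihilated by all $R_{2k}$, $k=1,\dots,g-1$, are the polynomials in $z_1$ alone. To prove it, suppose $p$ depends on some higher variable and let $2t-1$, with $2\le t\le g$, be the largest index occurring in $p$. In $R_{2(t-1)}=\sum_{s=1}^{g-t+1}(2s-1)z_{2s-1}\partial_{2s+2t-3}$, every term with $s\geq 2$ differentiates in an index $>2t-1$ and hence annihilates $p$, leaving $R_{2(t-1)}p=z_1\partial_{2t-1}p$; the condition $R_{2(t-1)}p=0$ then forces $\partial_{2t-1}p=0$, contradicting the choice of $t$. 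Thus the joint kernel of the $R_{2k}$ inside $V_N$ is exactly $\langle z_1^N\rangle$, which lives in total degree $N$, so in every lower stratum the joint kernel is zero.

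Uniqueness now follows formally. If $m,m'$ are two normalized solutions, their difference $p=m-m'$ lies in $V_N$, solves the entire hierarchy, and has vanishing $z_1^N$-coefficient, i.e. $p_N=0$. A downward induction on total degree finishes the argument: assuming $p_{e+2}=0$, the relation gives $R_{2k}p_e=-D_{2k}p_{e+2}=0$ for all $k$, whence $p_e=0$ by the lemma (the stratum being below the top). Therefore every $p_e$ vanishes, $m=m'$, and combined with the existence of $\sigma(z,0)$ this proves that the conditions determine $m_g(z)$ uniquely. The only place where real structure is used is the top-index computation in the lemma; everything else is bookkeeping with the two gradings.
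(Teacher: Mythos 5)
Your argument is correct and is essentially the paper's own proof: both use $\widehat{H}_0$ to force weighted homogeneity, extend the three equations to the full hierarchy $\widehat{H}_{2k}\psi=0$ via \eqref{eq9}, and run a downward induction on the total degree $i_1+i_3+\cdots+i_{2g-1}$ in which the part involving the largest occurring index $2t-1$ is recovered through the term $z_1\partial_{2t-1}$ of $\widehat{H}_{2(t-1)}$. The only difference is packaging: you isolate this mechanism as a joint-kernel lemma for the first-order parts $R_{2k}$ after splitting $\widehat{H}_{2k}=D_{2k}+R_{2k}$, whereas the paper carries out the identical step as an explicit coefficient recursion, with an inner induction on the largest index having $i_{2g-2k+1}\neq 0$ playing the role of your largest-index argument.
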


\begin{proof}
Set
\[
m_g(z) = \sum_{i_1, i_3, \ldots, i_{2g-1}}
a(i_1, i_3, \ldots, i_{2g-1}) z_1^{i_1} z_3^{i_3} \ldots z_{2g-1}^{i_{2g-1}}.
\]

The equation $\widehat{H}_0 m_g(z) = 0$ implies
\[
a(i_1, i_3, \ldots, i_{2g-1}) = 0 \qquad
\text{for} \qquad i_1 + 3 i_3 + \ldots + (2g-1) i_{2g-1} \ne {g (g+1) \over 2}.
\]

Let us find the coefficients with $ i_1 + 3 i_3 + \ldots + (2g-1) i_{2g-1} = g (g+1)/ 2$.

We have $a(g (g+1) / 2, 0, \ldots, 0) = 1$ from the condition $m_g(1,0,\ldots,0) = 1$.
We find the other coefficients $a(i_1, i_3, \ldots, i_{2g-1})$ by induction by weight 
\[
wt(i_1, i_3, \ldots, i_{2g-1}) = i_1 + i_3 + \ldots + i_{2g-1} 
\]
starting from the highest weight to the lowest.
Among the coefficients of the same weight we find first the coefficients with $i_{2g-1} \ne 0$, then with $i_{2g-1} = 0, i_{2g-3} \ne 0$, and so on, with the coefficients $a(i_1, i_3, \ldots, i_{2g-2k+1}, 0, \ldots, 0)$, where $i_{2g-2k+1} \ne 0$, on the $k$-th step.

Consider the operator (see Lemma \ref{lem522}):
\[
\widehat{H}_{2(g-k)} = {1 \over 2} \sum_{s=1}^{g-k} \partial_{2s-1} \partial_{2g-2k-2s+1}  +  z_{1} \partial_{2g-2k+1} + \sum_{s=2}^{k} (2s-1) z_{2s-1} \partial_{2g-2k+2s-1}.
\]
By \eqref{eq9} we have $\widehat{H}_{2(g-k)} m_g(z) = 0$. The coefficient at
\[
z_1^{i_1+1} z_3^{i_3} \ldots z_{2g-2k-1}^{ i_{2g-2k-1}} z_{2g-2k+1}^{i_{2g-2k+1}-1} 
\]
in this equation 
gives a relation between 
$a(i_1, i_3, \ldots, i_{2g-2k+1}, 0, \ldots, 0)$ and the coefficients $a(j_1, j_3, \ldots, j_{2g-1})$, where either 
\[
wt(j_1, j_3, \ldots, j_{2g-1}) = wt(i_1, i_3, \ldots, i_{2g-2k+1}, 0, \ldots, 0) + 2, 
\]
or $j_p = 1$ for some $p > 2g-2k+1$. In fact, it gives an expression for 
\[
i_{2g-2k+1} a(i_1, i_3, \ldots, i_{2g-2k+1}, 0, \ldots, 0) 
\]
as a polynomial with integer coefficients in $a(j_1, j_3, \ldots, j_{2g-1})$, where the conditions on~$(j_1, j_3, \ldots, j_{2g-1})$ are given above. This provides the step of the induction and thus proves the statement of the Lemma.
\end{proof}

\begin{cor} \label{cor55}
 There is no non-zero polynomial solution $m_g^0(z) = \psi(z)$ to the system
\begin{align*}
\widehat{H}_0 \psi(z) &= 0, & \widehat{H}_2 \psi(z) &= 0, & \widehat{H}_4 \psi(z) &= 0
\end{align*}
with the condition $m_g^0(1,0,\ldots,0) = 0$.
\end{cor}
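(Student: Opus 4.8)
The plan is to deduce Corollary~\ref{cor55} directly from Lemma~\ref{lem54} by exploiting the linearity of the system. First I would observe that the three conditions $\widehat{H}_0\psi = \widehat{H}_2\psi = \widehat{H}_4\psi = 0$ are linear and homogeneous in $\psi$, so the set $S$ of all polynomial solutions is a $\mathbb{C}$-vector space, and the evaluation $\mathrm{ev}\colon S \to \mathbb{C}$, $\psi \mapsto \psi(1,0,\ldots,0)$, is a linear functional on $S$. Note also that, because $\widehat{H}_0\psi=0$ forces every monomial of $\psi$ to have weighted degree $i_1 + 3 i_3 + \cdots + (2g-1) i_{2g-1} = g(g+1)/2$, the only pure power of $z_1$ that can occur is $z_1^{g(g+1)/2}$; hence $\mathrm{ev}(\psi)$ equals the single coefficient $a(g(g+1)/2,0,\ldots,0)$.

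Second, I would invoke Lemma~\ref{lem54}, which provides exactly one element $m_g \in S$ with $\mathrm{ev}(m_g)=1$. Suppose, for contradiction, that the corollary fails, that is, there is a nonzero $\phi \in S$ with $\mathrm{ev}(\phi)=0$. Then $m_g + \phi$ again lies in $S$ and satisfies $\mathrm{ev}(m_g+\phi)=1$, while $m_g+\phi \ne m_g$. This contradicts the uniqueness asserted in Lemma~\ref{lem54}. Therefore $\ker(\mathrm{ev})=0$, which is precisely the statement that no nonzero polynomial solution can have $\psi(1,0,\ldots,0)=0$.

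As a more self-contained alternative, I would rerun the induction from the proof of Lemma~\ref{lem54} with the seed $a(g(g+1)/2,0,\ldots,0)=\mathrm{ev}(\psi)=0$. Since each coefficient there is expressed, via the operators $\widehat{H}_{2(g-k)}$ obtained from \eqref{eq9}, as a linear combination with integer coefficients of coefficients already determined at the previous steps, the recursion is homogeneous; starting from the seed $0$ it forces every coefficient to vanish, so $\psi\equiv 0$.

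There is essentially no remaining obstacle: the substantive work, namely establishing that the weighted-homogeneity constraint from $\widehat{H}_0$ together with the triangular recursion through $\widehat{H}_{2(g-k)}$ pins down all coefficients, has already been carried out in Lemma~\ref{lem54}. The only point requiring a moment's care is to confirm that this recursion is genuinely linear and homogeneous in the coefficients, so that the zero seed propagates to the zero solution; this is immediate because the operators $\widehat{H}_{2k}$ are linear. Finally, I would record the byproduct that $S$ is one-dimensional, $S = \mathbb{C}\cdot m_g$, since every $\psi\in S$ satisfies $\psi - \mathrm{ev}(\psi)\,m_g \in \ker(\mathrm{ev})=0$.
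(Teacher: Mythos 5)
Your main argument is exactly the paper's own proof: by linearity, a nonzero solution $m_g^0$ with $m_g^0(1,0,\ldots,0)=0$ would make $m_g + m_g^0$ a second solution with value $1$ at $(1,0,\ldots,0)$, contradicting the uniqueness in Lemma~\ref{lem54}. Your self-contained alternative (rerunning the induction of Lemma~\ref{lem54} with zero seed) is also sound, but it is just the same triangular recursion and is not needed once the uniqueness lemma is in hand.
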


\begin{proof}
For a solution $m_g(z)$ of Lemma \ref{lem54} the expression $m_g(z) + m_g^0(z)$ gives another solution as the problem is linear. This contradicts the statement of the Lemma.
\end{proof}

\begin{cor} \label{cor56}
The conditions
\begin{align*}
\widehat{H}_0 \psi(z) &= 0, & \widehat{H}_2 \psi(z) &= 0, & \widehat{H}_4 \psi(z) &= 0
\end{align*}
determine the polynomial $m_g^c(z) = \psi(z)$ uniquely up to a multiplicative constant.
\end{cor}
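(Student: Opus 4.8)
The plan is to deduce this directly from Lemma~\ref{lem54} and Corollary~\ref{cor55} by exploiting the linearity of the system. Since the operators $\widehat{H}_0,\widehat{H}_2,\widehat{H}_4$ are linear differential operators, the set of polynomial solutions $\psi(z)$ of the three equations forms a vector space over $\mathbb{C}$. The evaluation map $\psi \mapsto \psi(1,0,\ldots,0)$ is a linear functional on this space, and the whole argument reduces to understanding its kernel together with the normalized solution supplied by Lemma~\ref{lem54}.

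Concretely, I would argue as follows. Let $\psi(z)$ be an arbitrary polynomial solution of the system, and set $c = \psi(1,0,\ldots,0)$. By Lemma~\ref{lem54} there exists the (unique) polynomial solution $m_g(z)$ with $m_g(1,0,\ldots,0)=1$. Consider the polynomial
\[
\varphi(z) = \psi(z) - c\, m_g(z).
\]
By linearity $\varphi$ again satisfies $\widehat{H}_0\varphi = \widehat{H}_2\varphi = \widehat{H}_4\varphi = 0$, and by construction
\[
\varphi(1,0,\ldots,0) = \psi(1,0,\ldots,0) - c\, m_g(1,0,\ldots,0) = c - c = 0.
\]
Corollary~\ref{cor55} then forces $\varphi \equiv 0$, that is, $\psi(z) = c\, m_g(z)$. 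Thus every polynomial solution is a scalar multiple of $m_g$, which is exactly the claim that the polynomial $m_g^c(z)=\psi(z)$ is determined uniquely up to a multiplicative constant.

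There is essentially no serious obstacle here: the statement is a formal consequence of the existence-and-normalization result (Lemma~\ref{lem54}) and the vanishing result for the kernel of the normalization functional (Corollary~\ref{cor55}). The only point that warrants a word of care is the degenerate case $c=0$, where the argument shows $\psi\equiv 0$, consistent with the fact that the trivial solution is the unique one vanishing at $(1,0,\ldots,0)$; the nontrivial solutions are precisely those with $c\neq 0$, each of which is recovered as $c\,m_g(z)$. Hence the solution space is exactly one-dimensional, spanned by $m_g$.
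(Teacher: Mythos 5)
Your proof is correct and follows essentially the same route as the paper: both arguments combine the linearity of the system with Lemma~\ref{lem54} and Corollary~\ref{cor55}. The only cosmetic difference is that the paper first invokes Corollary~\ref{cor55} to get $c\neq 0$ and then rescales to apply the uniqueness of Lemma~\ref{lem54}, whereas you subtract $c\,m_g$ and apply Corollary~\ref{cor55} to the difference, which handles the case $c=0$ uniformly.
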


\begin{proof}
Let $m_g^c(1,0,\ldots,0) = c$, by Corollary \ref{cor55} we have $c \ne 0$ for a non-zero polynomial~$m_g^c(z)$, thus
we have $m_g^c(z) = c m_g(z)$ for the solution $m_g(z)$ of Lemma \ref{lem54}.
\end{proof}

\begin{thm} \label{ts5} 
For each genus $g$ any polynomial solution $\psi(z)$ of the system 
\begin{align} \label{s11}
\widehat{H}_0 \psi(z) &= 0, & \widehat{H}_2 \psi(z) &= 0, & \widehat{H}_4 \psi(z) &= 0
\end{align}
coincides with the rational limit of the sigma function up to a multiplicative constant.
\end{thm}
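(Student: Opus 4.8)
The plan is to identify the distinguished polynomial $m_g(z)$ of Lemma \ref{lem54} with the rational limit $\sigma(z,0)$ of the sigma function, and then invoke the uniqueness up to a constant already established in Corollary \ref{cor56}. Given those two ingredients, the theorem is almost immediate: the only genuine content beyond the preceding corollaries is the equality $\sigma(z,0) = m_g(z)$.

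First I would check that the rational limit $\sigma(z,0)$ is itself a polynomial solution of the system \eqref{s11}. Indeed, the sigma function $\sigma(z,\lambda)$ solves the full system of heat equations $Q_{2k}\sigma = 0$, hence $L_{2k}\sigma = H_{2k}\sigma$ for all $k$; as noted at the beginning of this section, the explicit form of the $L_{2k}$ shows that setting $\lambda = 0$ annihilates the $L_{2k}$ part, yielding $\widehat{H}_{2k}\bigl(\sigma(z,0)\bigr) = 0$ for all $k$, and in particular for $k=0,1,2$. That $\sigma(z,0)$ is a genuine polynomial is exactly the defining property of the rational limit recalled in the Introduction.

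Next I would pin down the normalization. The equation $\widehat{H}_0\,\sigma(z,0)=0$ forces every monomial $z^I$ occurring in $\sigma(z,0)$ to satisfy $i_1 + 3i_3 + \cdots + (2g-1)i_{2g-1} = g(g+1)/2$, so upon setting $z_3 = \cdots = z_{2g-1} = 0$ only the term $z_1^{g(g+1)/2}$ survives. By the normalization $\sigma(z,\lambda) = z_1^{g(g+1)/2} + \cdots$ this term has coefficient $1$, whence $\sigma(1,0,\ldots,0) = 1$. Therefore $\sigma(z,0)$ satisfies precisely the hypotheses of Lemma \ref{lem54}, and by the uniqueness asserted there we conclude $\sigma(z,0) = m_g(z)$.

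Finally, Corollary \ref{cor56} states that \eqref{s11} determines its polynomial solution uniquely up to a multiplicative constant, so any polynomial solution $\psi(z)$ equals $c\,m_g(z) = c\,\sigma(z,0)$ for some $c \in \mathbb{C}$, which is the assertion of the theorem. The step requiring the most care is the one underpinning the whole argument: verifying that restriction to $\lambda = 0$ really does turn a solution of the full heat system into a solution of the rational-limit system \eqref{s11}. This rests on the precise form of the operators $L_{2k}$ and $H_{2k}$ recorded in Section \ref{S3} (so that $L_{2k}\psi(z,0)=0$ and $(H_{2k}\psi)(z,0) = \widehat{H}_{2k}(\psi(z,0))$), and everything else then follows formally from Lemma \ref{lem54} and Corollary \ref{cor56}.
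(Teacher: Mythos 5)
Your proof is correct and follows essentially the same route as the paper: show that $\sigma(z,0)$ is a polynomial solution of the system \eqref{s11} (via Theorem 2.6 of \cite{Nonhol} and the restriction to $\lambda=0$) and then invoke the uniqueness up to a multiplicative constant from Corollary \ref{cor56}. Your additional verification that $\sigma(1,0,\ldots,0)=1$, identifying $\sigma(z,0)$ with the normalized polynomial $m_g(z)$ of Lemma \ref{lem54}, is a harmless refinement not strictly needed for the statement as given.
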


\begin{proof}
By Theorem 2.6 in \cite{Nonhol} the function $\sigma(z, \lambda)$ is a solution of the system of heat equations \eqref{e3}. Thus, $\sigma(z, 0)$ is a polynomial solution of the system~\eqref{s11}, and by Corollary~\ref{cor56} it coincides with any other polynomial solution up to a multiplicative constant.
\end{proof}

\begin{thm}
For each genus $g$ any polynomial solution $\psi(z)$ of the system 
\begin{align*} 
\widehat{H}_0 \psi(z) &= 0, & \widehat{H}_2 \psi(z) &= 0, & \widehat{H}_4 \psi(z) &= 0
\end{align*}
coincides with the corresponding Adler--Moser polynomial up to a multiplicative constant and a change of variables.
\end{thm}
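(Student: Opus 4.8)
The plan is to combine the uniqueness result of Theorem~\ref{ts5} with the defining bilinear recursion of the Adler--Moser polynomials. By Theorem~\ref{ts5} every polynomial solution of the system~\eqref{s11} is a scalar multiple of the normalized solution $m_g(z)$, so the multiplicative constant in the statement is accounted for immediately, and it remains to identify $m_g(z)$ with the Adler--Moser polynomial $\theta_g$ under an invertible change of variables. The Remark following Example~\ref{ex5} already records the candidate substitution $\tau_1 = z_1$, $\tau_2 = -3 z_3$, $\tau_3 = 45 z_5$, and so on; the task is to prove that it works for every $g$. The bridge between the two objects is the identity
\[
m_{g+1}' m_{g-1} - m_{g+1} m_{g-1}' = (2g+1)\, m_g^2, \qquad m_0 = 1,
\]
where the prime denotes $\partial / \partial z_1$.

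To make this relation meaningful I would first place all the $m_g$ in a common ambient ring. The operators $\widehat H_{2k}$ for different $g$ are mutually compatible --- this is precisely the content of the operators $A_{2k}$ of Section~\ref{S6}, which satisfy $A_{2k} = -\widehat H_{2k}$ after the truncation $z_{2s-1} = \partial_{2s-1} = 0$ for $s > g$ --- so I would regard every $m_g$ as an element of the single ring $\mathbb{Q}[z_1, z_3, z_5, \ldots]$. In this ring both sides of the identity above are weight-homogeneous of weight $-g(g+1)$, and the equation $\widehat H_0 m_g = 0$ forces $m_g$ to be homogeneous of weight $-g(g+1)/2$ with leading term $z_1^{g(g+1)/2}$, which pins the normalization on both sides.

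Proving the bilinear recursion is the crux of the argument, and I expect it to be the main obstacle, since the defining equations~\eqref{s11} are \emph{linear} in $\psi$ while the Adler--Moser relation is \emph{quadratic}. The route I would take is to read the recursion off from the KdV structure carried by the equations $\widehat H_2 m_g = 0$ and $\widehat H_4 m_g = 0$: the first rewrites $\partial_1^2 m_g$ through the weighted derivatives $\sum_s (2s-1) z_{2s-1} \partial_{2s+1} m_g$, so that $u_g := -2 \partial_1^2 \ln m_g$ evolves according to the flows carried by the higher variables $z_{2s+1}$, and iterating with $\widehat H_4$ and the recursion~\eqref{eq9} produces the full stationary hierarchy for $u_g$. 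The step from genus $g$ to $g+1$ is then a one-pole Darboux transformation, whose Wronskian form $m_{g-1} m_{g+1}' - m_{g-1}' m_{g+1} = (2g+1) m_g^2$ is exactly the identity above. Equivalently, one may identify $m_g$ with the Schur polynomial of the staircase partition $(g, g-1, \ldots, 1)$ in the variables $\tau_k$, for which the recursion is a classical Pl\"ucker relation; I would use whichever derivation keeps the cross-genus bookkeeping cleanest.

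Finally I would invoke the construction of the Adler--Moser polynomials from~\cite{AM}. There $\theta_{g+1}$ is produced from $\theta_{g-1}$ and $\theta_g$ by solving the same first-order linear ordinary differential equation in $z_1$ supplied by the bilinear identity; its general solution is determined up to the homogeneous term $\tau_{g+1}\, \theta_{g-1}$, so that the construction introduces exactly one new variable $\tau_{g+1}$ at each step. Starting from $m_1 = z_1 = \theta_1$ with $\tau_1 = z_1$ and propagating the recursion, the sequence $m_g$ must coincide with $\theta_g$ once the $z_1$-integration constant at each step is recorded as a polynomial in $z_3, z_5, \ldots$; these are precisely the substitutions $\tau_k = \tau_k(z_1, \ldots, z_{2k-1})$ of the Remark, as one checks directly for $g = 1, 2$. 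Since each $\tau_k$ depends only on $z_1, \ldots, z_{2k-1}$ and contains $z_{2k-1}$ with a nonzero constant coefficient, the map $z \mapsto \tau$ is triangular for the weight grading and hence a polynomially invertible change of variables. Together with the scalar factor from Theorem~\ref{ts5}, this yields the statement.
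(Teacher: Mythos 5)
Your reduction via Theorem~\ref{ts5}, the weight bookkeeping, and the closing induction on the Adler--Moser construction (matching the $z_1$-integration constant at each step with a new variable $\tau_{k}$ and checking triangularity of $z\mapsto\tau$) are all sound. But the proof has a genuine gap exactly where you flag it: the bilinear identity $m_{g+1}'m_{g-1}-m_{g+1}m_{g-1}'=(2g+1)\,m_g^2$ is never established, and both routes you sketch assume the very fact that carries the content of the theorem. The Darboux route asserts that the passage from $m_g$ to $m_{g+1}$ ``is a one-pole Darboux transformation,'' but nothing in the system~\eqref{s11} relates solutions of \emph{different} genera: the operators $\widehat{H}_{2k}$ constrain each $m_g$ separately and linearly (even in the common ring given by the $A_{2k}$, the truncations $z_{2s-1}=\partial_{2s-1}=0$ for $s>g$ differ), so the existence of a Darboux chain linking the $m_g$ is equivalent to the Adler--Moser statement being proved, not a tool for proving it. The Pl\"ucker route presupposes that $m_g$ is the staircase Schur polynomial in suitable variables; that identification is precisely the nontrivial external input (Theorem~6.3 of \cite{Rat}, restated here as Theorem~\ref{T-4}) and does not follow from the leading-term normalization and homogeneity you verify.

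For comparison, the paper closes this gap by citation plus the machinery of Section~\ref{SL}: by Theorem~\ref{ts5} the polynomial solution is the rational limit $\widehat{\sigma}$ of the sigma function; by \cite{Rat} (Theorem~\ref{T-4}) $\widehat{\sigma}$ equals the Schur--Weierstrass polynomial ${\ShW}_g$ after a triangular change of variables; ${\ShW}_g$ is a Wronskian of elementary symmetric functions satisfying $e_k''=e_{k-2}$, so the Wronskian identity of \cite{AM} (Lemma~\ref{L-4}) yields the Burchnall--Chaundy equations~\eqref{F-8}; and the rescaling of Lemma~\ref{L-44} converts these into the Adler--Moser equations~\eqref{F-9}, giving ${\ShW}_g=\mu_g^{-1}\theta_g$ under an explicit linear substitution in the $p_{2k-1}$. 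If you want a self-contained argument along your lines, the missing step you must actually prove is that ${\ShW}_g$ (equivalently the staircase Schur polynomial) is annihilated by $\widehat{H}_0$, $\widehat{H}_2$, $\widehat{H}_4$ --- i.e.\ verify these Virasoro-type constraints directly on the determinant --- after which Theorem~\ref{ts5} identifies it with $m_g$ and the Wronskian identity supplies the quadratic recursion. As written, the quadratic structure is asserted rather than derived from the linear system.
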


\begin{proof}
This result follows from the relation between Adler--Moser polynomials, Shur--Weierstrass polynomials and rational limits of sigma functions. These results will be discussed in detail in Section \ref{SL}. See also Theorem 6.3 in \cite{Rat} and  Exercise 7.35(c) in~\cite{Stanley}.
\end{proof}

\section{Lie subalgebra of Witt algebra} \label{S6}

For $k = 0,1,2,\ldots$ we set 
\begin{equation} \label{A2k}
A_{2k} = - {1 \over 2} \sum_{s=1}^{k} \partial_{2s-1} \partial_{2k+1-2s} - \sum_{s=1}^{\infty} (2s-1) z_{2s-1} \partial_{2s+2k-1}.
\end{equation}
Note that $A_0$ is the Euler vector field:
\[
A_{0} = - \sum_{s=1}^{\infty} (2s-1) z_{2s-1} \partial_{2s-1}.
\]

Given~$g$, we set $z_{2s-1} \equiv 0$ for $s > g$, thus for each $g$ the sum in \eqref{A2k} is finite. If we set $\partial_s = 0$ for $s > 2g-1$, we obtain the relations with the rational limits of the operators~$H_{2k}$:
\[
A_{2k} = - \widehat{H}_{2k}
\]
for $k = 1,2,\ldots,2g-1$,
and 
\[
A_{0} = - \widehat{H}_{0} - {g (g+1) \over 2}.
\]

The following Lemma shows that the operators $A_{2k}$ are the generators of the Lie subalgebra $W_0$ of the Witt algebra. In \cite{BM} the Lie subalgebra $W_{-1}$ of the Witt algebra appears in a closely related construction.

\vfill

\eject

\begin{lem} \label{lem61} The commutation relation holds:
\[
[A_{2i}, A_{2j}] = 2 (j-i) A_{2(i+j)}.
\]
\end{lem}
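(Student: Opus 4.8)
The plan is to split each generator into its second-order and first-order parts and to match these two parts of the bracket separately. Write $A_{2k} = D_{2k} + R_{2k}$, where
\[
D_{2k} = -\frac{1}{2}\sum_{s=1}^{k}\partial_{2s-1}\partial_{2k+1-2s}, \qquad R_{2k} = -\sum_{s=1}^{\infty}(2s-1)\,z_{2s-1}\partial_{2s+2k-1} .
\]
Here $D_{2k}$ is a constant-coefficient second-order operator and $R_{2k}$ is a first-order vector field. Expanding the bracket,
\[
[A_{2i},A_{2j}] = [D_{2i},D_{2j}] + \bigl([D_{2i},R_{2j}] + [R_{2i},D_{2j}]\bigr) + [R_{2i},R_{2j}] .
\]
Since the operators $\partial_{2s-1}$ mutually commute, $[D_{2i},D_{2j}]=0$. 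The middle group is a constant-coefficient second-order operator and the last bracket is again a vector field, so it suffices to prove $[R_{2i},R_{2j}] = 2(j-i)R_{2(i+j)}$ and that the cross terms contribute $2(j-i)D_{2(i+j)}$; summing the three pieces then reproduces $2(j-i)A_{2(i+j)}$.

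For the vector-field bracket I would reindex $R_{2i} = \sum_{n>i} a_n^{(i)}\partial_{2n-1}$ with $a_n^{(i)} = -(2(n-i)-1)z_{2(n-i)-1}$ (and $a_n^{(i)}=0$ otherwise), and use the standard formula $[R_{2i},R_{2j}] = \sum_n\bigl(R_{2i}a_n^{(j)} - R_{2j}a_n^{(i)}\bigr)\partial_{2n-1}$ together with $\partial_{2s-1}z_{2m-1} = \delta_{s,m}$. Applying each field to the single linear coefficient produces terms proportional to $(2(n-i-j)-1)z_{2(n-i-j)-1}\partial_{2n-1}$, and the two contributions share this factor so that their difference reduces to the scalar $(2(n-j)-1)-(2(n-i)-1) = 2(i-j)$. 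Comparing with $R_{2(i+j)} = -\sum_n(2(n-i-j)-1)z_{2(n-i-j)-1}\partial_{2n-1}$ then yields $[R_{2i},R_{2j}] = 2(j-i)R_{2(i+j)}$.

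For the cross terms I would use the elementary identity $[\partial_{2a-1}\partial_{2b-1},\,z_{2s-1}\partial_{2c-1}] = (\delta_{a,s}\partial_{2b-1}+\delta_{b,s}\partial_{2a-1})\partial_{2c-1}$, valid because $\partial\partial$ is constant-coefficient and $z_{2s-1}$ is linear. Summing the Kronecker deltas against $D_{2i}$ and symmetrizing the $a\leftrightarrow b$ summation gives, with $N:=i+j+1$,
\[
[D_{2i},R_{2j}] = \sum_{u=1}^{i}\bigl(2(i-u)+1\bigr)\,\partial_{2u-1}\partial_{2(N-u)-1}, \qquad [R_{2i},D_{2j}] = \sum_{u=1}^{j}\bigl(2(u-j)-1\bigr)\,\partial_{2u-1}\partial_{2(N-u)-1} .
\]
The decisive step is to exploit the symmetry $\partial_{2u-1}\partial_{2(N-u)-1} = \partial_{2(N-u)-1}\partial_{2u-1}$: relabelling $u\mapsto N-u$ in the first sum turns its range $[1,i]$ into $[j+1,N-1]$ with coefficient $2(u-j)-1$, which glues onto the second sum (range $[1,j]$, same coefficient) to give $\sum_{u=1}^{N-1}(2u-2j-1)\,\partial_{2u-1}\partial_{2(N-u)-1}$. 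Subtracting the target $2(j-i)D_{2(i+j)} = -(j-i)\sum_{u=1}^{N-1}\partial_{2u-1}\partial_{2(N-u)-1}$ leaves $\sum_{u=1}^{N-1}(2u-N)\,\partial_{2u-1}\partial_{2(N-u)-1}$, whose terms at $u$ and $N-u$ are negatives of one another and cancel in pairs. This confirms that the cross terms equal $2(j-i)D_{2(i+j)}$, and adding the three contributions gives $[A_{2i},A_{2j}] = 2(j-i)\bigl(D_{2(i+j)} + R_{2(i+j)}\bigr) = 2(j-i)A_{2(i+j)}$. I expect the main obstacle to be precisely this index bookkeeping in the cross terms: one must notice that, although the raw coefficient of $\partial_{2u-1}\partial_{2(N-u)-1}$ is the $u$-dependent quantity $2u-2j-1$, the symmetry of the second-order monomials forces its antisymmetric-in-$u$ part $2u-N$ to cancel, leaving exactly the constant $-(j-i)$ required by the right-hand side.
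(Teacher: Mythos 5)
Your proof is correct, and it follows essentially the same route as the paper: the paper's computation likewise treats the constant-coefficient second-order part and the first-order vector-field part of $A_{2k}$ separately, obtains $2(j-i)R_{2(i+j)}$ from the $z\partial$ terms, and uses the symmetry $\partial_a\partial_b=\partial_b\partial_a$ to reorganize the $\partial\partial$ terms into $(i-j)\sum_{s=1}^{i+j}\partial_{2s-1}\partial_{2i+2j+1-2s}=2(j-i)D_{2(i+j)}$. Your relabel-and-glue handling of the cross terms is just a cosmetic variant of the paper's min/max symmetrization, so there is no substantive difference.
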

\begin{proof}
We have 
\begin{multline*}
\sum_{s=1}^i (2s -1) \partial_{2i+1-2s} \partial_{2s + 2j - 1} - \sum_{s=1}^j (2s -1) \partial_{2j+1-2s} \partial_{2s + 2i - 1} = \\
= \sum_{s=1}^i (2i - 2s +1) \partial_{2s-1} \partial_{2i+2j-2s+1} - \sum_{s=1}^j (2j - 2s +1) \partial_{2s-1} \partial_{2i+2j-2s+1} = 
\\
=
2 (i - j) \sum_{s=1}^{min(i,j)} \partial_{2s-1} \partial_{2i+2j-2s+1} + \\
+ sign(i-j) \sum_{s=min(i,j)+1}^{max(i,j)} (2 \, max(i,j)-2s+1) \partial_{2s-1} \partial_{2i+2j-2s+1}
= \end{multline*}
\begin{multline*}
=
(i - j) \sum_{s=1}^{min(i,j)} \partial_{2s-1} \partial_{2i+2j-2s+1} + (i - j) \sum_{s=max(i,j)+1}^{i+j} \partial_{2s-1} \partial_{2i+2j-2s+1} + \\
+ {1 \over 2} sign(i-j) \sum_{s=min(i,j)+1}^{max(i,j)} (2 \, max(i,j)-2s+1) \partial_{2s-1} \partial_{2i+2j-2s+1} + \\
+ {1 \over 2} sign(i-j) \sum_{s=min(i,j)+1}^{max(i,j)} (2s-1-2 \, min(i,j)) \partial_{2s-1} \partial_{2i+2j-2s+1}
= \\
= (i-j)  \sum_{s=1}^{i+j} \partial_{2s-1} \partial_{2i+2j+1-2s};
\end{multline*}
therefore,
\begin{multline*}
[A_{2i}, A_{2j}] = \sum_{s=1}^i (2s -1) \partial_{2i+1-2s} \partial_{2s + 2j - 1} - \sum_{s=1}^j (2s -1) \partial_{2j+1-2s} \partial_{2s + 2i - 1} + \\
+ \sum_{s=1}^\infty (2s +2i-1) (2s-1) z_{2s-1} \partial_{2s+2i+2j-1} 
- \sum_{s=1}^\infty (2s +2j-1) (2s-1) z_{2s-1} \partial_{2s+2i+2j-1} = \\ = (i-j)  \sum_{s=1}^{i+j} \partial_{2s-1} \partial_{2i+2j+1-2s} + 2 (i-j) \sum_{s=1}^\infty (2s-1) z_{2s-1} \partial_{2s+2i+2j-1} = 2 (j-i) A_{2(i+j)}.
\end{multline*}
\vspace{-5mm}

\end{proof}

\vfill

\eject

\section{Derivations of the field of genus $g$ hyperelliptic functions} \label{S7}

In this section we give explicitly a part of the solution to the problem of constructing the Lie algebra of derivations of $\mathcal{F}$, i.e. of~finding~$3g$ independent differential operators~$\mathcal{L}$ such that $\mathcal{L} \mathcal{F} \subset \mathcal{F}$. The setting of the problem, as well as a general approach to the solution, can be found in \cite{BL0} and \cite{BL}. An~overview is given in \cite{BEL18}. In~\cite{FS}, \cite{B2}, \cite{B3}, and~\cite{B4} an explicit solution to this problem was obtained for $g=1,2,3,4$.

For any $g$ the operators $\mathcal{L}_{2k-1} = \partial_{2k-1}$, $k \in \{1,2,\ldots,g\}$, belong to the set of generators of this Lie algebra.
Here we give explicitly three of its other generators. 
\begin{thm}
The operators \vspace{-5pt}
\begin{align*}
\mathcal{L}_0 &= L_{0} - \sum_{s=1}^g (2s-1) z_{2s-1} \partial_{2s-1}; \\
\mathcal{L}_2 &= L_{2} - \zeta_1 \partial_1 - \sum_{s=1}^{g-1} (2s-1) z_{2s-1} \partial_{2s+1}
+ {4 \over 2 g + 1} \lambda_4  \sum_{s=1}^{g-1} (g - s)  z_{2s+1} \partial_{2s-1};\\
\mathcal{L}_4 &= L_{4} - \zeta_3 \partial_1 - \zeta_1 \partial_3 - \sum_{s=1}^{g-2} (2s-1) z_{2s-1} \partial_{2s+3}
- \\
& \qquad \qquad - \lambda_4 \sum_{s=1}^{g-1} (2s-1) z_{2s+1} \partial_{2s+1} + {6 \over 2 g + 1} \lambda_6 \sum_{s=1}^{g-1} (g - s) z_{2s+1} \partial_{2s-1}
\end{align*}
belong to the Lie algebra of derivations of $\mathcal{F}$. \vspace{-4pt}
\end{thm}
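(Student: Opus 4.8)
The plan is to use that $\mathcal{F}$ is the field of fractions of the ring $\mathcal{P}$ generated by the functions $\wp_{k_1,\ldots,k_n}$. Each $\mathcal{L}_{2k}$ is a first-order differential operator — the $\lambda$-vector field $L_{2k}$ plus a $z$-vector field with smooth coefficients — hence a genuine derivation of the ambient ring of smooth functions, which extends uniquely to $\mathcal{F}$ by the quotient rule. Thus it suffices to prove $\mathcal{L}_{2k}\mathcal{P}\subset\mathcal{P}$, and by the Leibniz rule this reduces to checking the statement on the generators. The decisive point is that neither the functions $\zeta_a=\partial_a\ln\sigma$ nor the explicit coordinates $z_a$ belong to $\mathcal{F}$, so the content of the theorem is that all such non-Abelian terms produced by $\mathcal{L}_{2k}$ must cancel. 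I would first record the uniform shape of the operators: comparing with the explicit $H_{2k}$ of Section \ref{S3}, one sees $\mathcal{L}_{2k}=L_{2k}-\sum_{a+b=2k}\zeta_a\partial_b-B_{2k}$, where $B_{2k}=\sum_{a,b}\beta_{a,b}^{(k)}z_a\partial_b$ is the first-order ($z$-coefficient) part of $H_{2k}$; for $k=0,1,2$ this is exactly $\mathcal{L}_0,\mathcal{L}_2,\mathcal{L}_4$.

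The engine of the proof is the heat equation $L_{2k}\sigma=H_{2k}\sigma$ (equivalently $Q_{2k}\sigma=0$, Theorem 2.6 in \cite{Nonhol}). Since $L_{2k}$ differentiates only $\lambda$ and commutes with every $\partial_j$, I would write $L_{2k}\ln\sigma=H_{2k}\sigma/\sigma$ and compute on the base generators
\[
L_{2k}\wp_{i,j}=-\partial_i\partial_j\frac{H_{2k}\sigma}{\sigma}.
\]
Expanding $H_{2k}\sigma/\sigma$ with $\partial_a\partial_b\sigma/\sigma=\zeta_a\zeta_b-\wp_{a,b}$, $\partial_b\sigma/\sigma=\zeta_b$, and the values of $\alpha,\beta,\gamma,\delta$ from Lemma \ref{L21} (where $\alpha_{a,b}^{(k)}=1$ precisely for $a+b=2k$), the only terms falling outside $\mathcal{P}$ are the $\zeta$-linear anomaly $\sum_{a+b=2k}\wp_{i,j,a}\zeta_b$, arising solely from the second-order part $\tfrac12\sum_{a+b=2k}\partial_a\partial_b$ of $H_{2k}$, together with an explicit-$z$ anomaly $\sum_{a,b}\beta_{a,b}^{(k)}z_a\wp_{i,j,b}$ from the $\beta$-part. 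The correction $-\sum_{a+b=2k}\zeta_a\partial_b$ in $\mathcal{L}_{2k}$ contributes $-\sum_{a+b=2k}\zeta_a\wp_{b,i,j}$, which kills the first anomaly by the symmetry $a\leftrightarrow b$, while $-B_{2k}\wp_{i,j}=-\sum_{a,b}\beta_{a,b}^{(k)}z_a\wp_{b,i,j}$ kills the second. What remains is a polynomial in the $\wp$'s and $\lambda$'s, so $\mathcal{L}_{2k}\wp_{i,j}\in\mathcal{P}$. This exact matching of the anomalies against the prescribed correction terms is the heart of the matter and the step I expect to require the most bookkeeping; the case $k=0$ is trivial, as $H_0$ has no second-order part and hence no $\zeta$-anomaly.

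Finally I would propagate to all generators by induction on $n$, writing $\wp_{k_1,\ldots,k_n}=\partial_{k_n}\wp_{k_1,\ldots,k_{n-1}}$ and using the commutator identity
\[
[\mathcal{L}_{2k},\partial_l]=-\sum_{a+b=2k}\wp_{l,a}\partial_b+\sum_b\beta_{l,b}^{(k)}\partial_b,
\]
which follows from $[\zeta_a,\partial_l]=\wp_{l,a}$, from $[z_a\partial_b,\partial_l]=-\delta_{a,l}\partial_b$, and from $[L_{2k},\partial_l]=0$. Crucially its coefficients lie in $\mathcal{P}$ and contain no $\zeta$'s. Then
\[
\mathcal{L}_{2k}\wp_{k_1,\ldots,k_n}=\partial_{k_n}\bigl(\mathcal{L}_{2k}\wp_{k_1,\ldots,k_{n-1}}\bigr)+[\mathcal{L}_{2k},\partial_{k_n}]\wp_{k_1,\ldots,k_{n-1}}
\]
stays in $\mathcal{P}$, since $\partial_{k_n}\mathcal{P}\subset\mathcal{P}$ and the inductive hypothesis applies. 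Specializing to $k=0,1,2$ shows $\mathcal{L}_0,\mathcal{L}_2,\mathcal{L}_4$ preserve $\mathcal{P}$, hence are derivations of $\mathcal{F}$, which is the assertion of the theorem.
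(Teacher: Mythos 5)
Your proposal is correct, but it takes a genuinely different route from the paper. The paper's own proof is essentially a citation: it observes that the stated operators are exactly those produced by the general solution of the problem of differentiation of Abelian functions over parameters (Theorems 13 and 14 of \cite{BL}), once the explicit coefficients of $H_0$, $H_2$, $H_4$ from Section \ref{S3} are inserted. What you do instead is reconstruct, self-containedly, the mechanism behind those cited theorems in the hyperelliptic case: the decomposition $\mathcal{L}_{2k}=L_{2k}-\sum_{a+b=2k}\zeta_a\partial_b-B_{2k}$, the computation $L_{2k}\wp_{i,j}=-\partial_i\partial_j\bigl(H_{2k}\sigma/\sigma\bigr)$ driven by $Q_{2k}\sigma=0$, the exact cancellation of the $\zeta$-linear and explicit-$z$ anomalies against the correction terms, and the commutator induction $[\mathcal{L}_{2k},\partial_l]=-\sum_{a+b=2k}\wp_{l,a}\partial_b+\sum_b\beta^{(k)}_{l,b}\partial_b$ to reach all generators $\wp_{k_1,\ldots,k_n}$. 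I checked the key cancellation and it works exactly as you say, leaving
\[
\mathcal{L}_{2k}\wp_{i,j}=-\sum_{a+b=2k}\wp_{i,a}\wp_{j,b}+\tfrac12\sum_{a+b=2k}\wp_{i,j,a,b}+\sum_b\bigl(\beta^{(k)}_{i,b}\wp_{j,b}+\beta^{(k)}_{j,b}\wp_{i,b}\bigr)-\gamma^{(k)}_{i,j}.
\]
Your route buys self-containedness and explicit formulas for the action of $\mathcal{L}_{2k}$ on the generators of $\mathcal{P}$, which is precisely what feeds the polynomial dynamical systems of Sections \ref{S8}--\ref{S9}; the paper's route buys brevity and generality, since the theorems of \cite{BL} cover all $(n,s)$-curves. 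One small imprecision to fix: the remainder above has coefficients polynomial in $\lambda$, so a priori it lies in $\mathcal{P}[\lambda]$ rather than $\mathcal{P}$; either invoke the known fact that the $\lambda_{2k}$ are themselves polynomials in the $\wp$-functions (cf.\ the expression for $\lambda_4$ in Section \ref{S9}), or simply note that $\lambda_{2k}\in\mathcal{F}$, being meromorphic on $\mathbb{C}^g\times\mathcal{B}$ and constant, hence Abelian, on each fiber --- which is all the theorem requires.
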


\begin{proof}
The Theorem follows directly from the explicit formulas given  in Section \ref{S3} and the Theorems 13 and 14 of \cite{BL}.
\end{proof}

Let us give for illustration the formulas for the corresponding operators for $g=1,2,3,4$ (cf. \cite{FS}, \cite{B2}, \cite{B3}, and \cite{B4}). They follow from formulas of Section \ref{S4}.

\subsection{Differential operators for genus $g=1$} \text{}
\begin{align*}
\mathcal{L}_0 &= L_0 - z_1 \partial_1; &
\mathcal{L}_2 &= L_2 - \zeta_1 \partial_1.
\end{align*}

\subsection{Differential operators for genus $g=2$} \text{}
\begin{align*}
\mathcal{L}_0 &= L_0 - z_1 \partial_1 - 3 z_3 \partial_3; &
\mathcal{L}_2 &= L_2 - \zeta_1 \partial_1 + {4 \over 5} \lambda_4 z_3 \partial_1 - z_1 \partial_3;\\ & & 
\mathcal{L}_4 &= L_4 - \zeta_3 \partial_1 - \zeta_1  \partial_3 + {6 \over 5} \lambda_6 z_3 \partial_1 - \lambda_4 z_3 \partial_3.
\end{align*}

\subsection{Differential operators for genus $g=3$} \text{}
\begin{align*}
\mathcal{L}_0 &= L_0 - 
z_1 \partial_1 - 3 z_3 \partial_3 - 5 z_5 \partial_5;
\\
\mathcal{L}_2 &= L_2 - \zeta_1 \partial_1 + {8 \over 7} \lambda_4 z_3 \partial_1 - z_1 \partial_3 + {4 \over 7} \lambda_4 z_5 \partial_3 - 3 z_3 \partial_5;\\
\mathcal{L}_4 &= L_4 - \zeta_3 \partial_1 - \zeta_1  \partial_3 + {12 \over 7} \lambda_6 z_3 \partial_1 - \lambda_4 z_3 \partial_3 + {6 \over 7} \lambda_6 z_5 \partial_3 - z_1 \partial_5 - 3 \lambda_4 z_5 \partial_5.
\end{align*}

\subsection{Differential operators for genus $g=4$} \text{}
\begin{align*}
\mathcal{L}_0 &= L_0 - 
z_1 \partial_1 - 3 z_3 \partial_3 - 5 z_5 \partial_5 - 7 z_7 \partial_7;\\
\mathcal{L}_2 &= L_2 - \zeta_1 \partial_1 - z_1 \partial_3 - 3 z_3 \partial_5 - 5 z_5 \partial_7 + {4 \over 9} \lambda_4 \left(3 z_{3} \partial_{1} + 2 z_{5} \partial_{3} + z_{7} \partial_{5} \right);\\
\mathcal{L}_4 &= L_4 - \zeta_3 \partial_1 - \zeta_1 \partial_3 - z_1 \partial_5 - 3 z_3 \partial_7 -
\lambda_4 \left( z_3 \partial_3 + 3 z_5 \partial_5 + 5 z_7 \partial_7 \right)
+ {2 \over 3} \lambda_6 \left( 3 z_3 \partial_1 + 2 z_5 \partial_3 + z_7 \partial_5 \right). 
\end{align*}

\vfill

\eject

\section{Polynomial dynamical systems and Korteweg--de Vries equation} \label{S8}

Consider the complex linear space $\mathbb{C}^{3g}$ with coordinates $\{ x_{i,j} \}$, where $i \in \{1,2,3\}$, $j \in \{1,3,\ldots,2g-1\}$. In the notation of Section \ref{S0}, we define the map $\mathbb{C}^{g} \times \mathcal{B} \to \mathbb{C}^{3g}$ by the relations 
\begin{equation} \label{star}
\begin{pmatrix}
x_{1,j} & x_{2,j} & x_{3,j}
 \end{pmatrix} =
 \begin{pmatrix}
\wp_{1,j}(z,\lambda) & \wp_{1,1,j}(z,\lambda) & \wp_{1,1,1,j}(z,\lambda)
 \end{pmatrix}
\end{equation}
for all $j \in \{1,3,\ldots,2g-1\}$. By this map, rational functions in $\mathbb{C}^{3g}$ correspond to genus~$g$ hyperelliptic functions (see Section 5 in  \cite{B3} for details). The map gives a correspondence of differential operators $\mathcal{L}_0, \mathcal{L}_2, \mathcal{L}_4$, and $\mathcal{L}_{2k-1}$, $k \in \{1,2,\ldots, g \}$ from Section \ref{S7} with polynomial vector fields $\mathcal{D}_0, \mathcal{D}_2, \mathcal{D}_4$, and $\mathcal{D}_{2k-1}$, $k \in \{1,2,\ldots, g \}$ in $\mathbb{C}^{3g}$. In the cases $g = 1,2,3,4$ these vector fields are given explicitly in \cite{B2}, \cite{B3}, and \cite{B4}.

For any genus $g$ we have (see equation (22) in \cite{B3})
\begin{align*} 
\mathcal{D}_0 &= \sum_j (j+1) x_{1,j} {\partial \over \partial x_{1,j}} + (j+2) x_{2,j} {\partial \over \partial x_{2,j}} +
(j+3) x_{3, j} {\partial \over \partial x_{3,j}} 
\end{align*}
and (see equation (23) in \cite{B3})
\begin{align*}
\mathcal{D}_1 &= \sum_j x_{2,j} {\partial \over \partial x_{1,j}} + x_{3, j} {\partial \over \partial x_{2,j}} + 4 (2 x_{1,1} x_{2,j} + x_{2,1} x_{1, j} + x_{2, j+2}) {\partial \over \partial x_{3,j}},
\end{align*}
where $x_{2,2 g + 1} = 0$. 

Let us describe the graded homogeneous dynamical systems in $\mathbb{C}^{3g}$ determined by these vector fields. Here we follow the approach of \cite{B2}, where such a description is given in the case of the genus~$g=2$.

The dynamical system $S_0$, corresponding to the Euler vector field $\mathcal{D}_0$, has the form
\[
 {\partial \over \partial \tau_0} x_{i,j} = (i+j) x_{i,j}, \quad i = 1,2,3, \quad j = 1,3,\ldots,2g-1.
\]
The dynamical system $S_1$, corresponding to the vector field $\mathcal{D}_1$, has the form
\begin{align} \label{eq19}
{\partial \over \partial \tau_1} x_{i,j} &= x_{i+1,j}, \quad i = 1,2, \quad j = 1,3,\ldots,2g-1,\\
{\partial \over \partial \tau_1} x_{3,j} &= 4 (2 x_{1,1} x_{2,j} + x_{2,1} x_{1, j} + x_{2, j+2}), \quad j = 1,3,\ldots,2g-1, 
\end{align}
where $x_{2,2 g + 1} = 0$. 
The equality \eqref{eq19} implies
\[
 {\partial \over \partial \tau_1} x_{3,1} = 4 (3 x_{1,1} x_{2,1} + x_{2,3}).
\]
The map \eqref{star} brings $\partial / \partial \tau_1$ into $\partial_1$ and $x_{i,j}$ into the corresponding hyperelliptic $\wp$-functions. This gives the KdV equation:
\[
4 \partial_3 \wp_{1,1} = \partial_1 (\partial_1^2 \wp_{1,1} - 6 \wp_{1,1}^2).
\]
The same result can be obtained directly from the algebraic relations between hyperelliptic $\wp$-functions, see Corollary 8 in \cite{BMulti}.

Using the results of Section \ref{S5}, we see that in the rational limit $\lambda = 0$ this gives the KdV equation 
\[
4 \partial_3 \widehat{\wp} = \partial_1 (\partial_1^2 \widehat{\wp} - 6 \widehat{\wp}^2)
\]
for the function $\widehat{\wp} = - \partial_1 \partial_1 \ln \widehat{\sigma}$,
where $\widehat{\sigma}$ is the rational limit of the sigma function.
 
\vfill

\eject

\section{Examples: Polynomial dynamical systems corresponding to~derivations of hyperelliptic functions of genus $g=3$} \label{S9}

The dynamical system $S_0$, corresponding to the Euler vector field $\mathcal{D}_0$, has the form
\[
 {\partial \over \partial \tau_0} x_{i,j} = (i+j) x_{i,j}, \quad i = 1,2,3, \quad j = 1,3,5.
\]
The dynamical system $S_1$, corresponding to the vector field $\mathcal{D}_1$, has the form 
\begin{align*}
{\partial \over \partial \tau_1} x_{i,j} &= x_{i+1,j}, \quad i = 1,2, \quad j = 1,3,5,\\
{\partial \over \partial \tau_1} x_{3,j} &= 4 (2 x_{1,1} x_{2,j} + x_{2,1} x_{1, j} + x_{2, j+2}), \quad j = 1,3,5, 
\end{align*}
where $x_{2,7} = 0$. 

The dynamical system $S_2$, corresponding to the vector field $\mathcal{D}_2$, has the form
\begin{align*}
{\partial \over \partial \tau_2} x_{1,1} &= {12 \over 7} \lambda_{4} + 2 x_{1,1}^2 + 4 x_{1,3}, \\
{\partial \over \partial \tau_2} x_{2,1} &= 3 x_{1,1} x_{2,1} + 5 x_{2,3}, \\
{\partial \over \partial \tau_2} x_{3,1} &= 3 x_{2,1}^2 + 2 x_{1,1} x_{3,1} + 6 x_{3,3}, \\
{\partial \over \partial \tau_2} x_{1,3} &= - {8 \over 7} \lambda_4 x_{1,1} + 2 x_{1,1} x_{1,3} + 6 x_{1,5}, \\
{\partial \over \partial \tau_2} x_{2,3} &= - {8 \over 7} \lambda_4 x_{2,1} + 3 x_{2,1} x_{1,3} + 7 x_{2,5}, \\
{\partial \over \partial \tau_2} x_{3,3} &= - {8 \over 7} \lambda_4 x_{3,1} + 4 x_{3,1} x_{1,3} + 3 x_{2,1} x_{2,3} - 2 x_{1,1} x_{3,3} + 8 x_{3,5}, \hspace{-25mm}  \\
{\partial \over \partial \tau_2} x_{1,5} &= - {4 \over 7} \lambda_4 x_{1,3} + 2 x_{1,1} x_{1,5},\\
{\partial \over \partial \tau_2} x_{2,5} &= - {4 \over 7} \lambda_4 x_{2,3} + 3 x_{2,1} x_{1,5},\\
{\partial \over \partial \tau_2} x_{3,5} &= - {4 \over 7} \lambda_4 x_{3,3} + 4 x_{3,1} x_{1,5} + 3 x_{2,1} x_{2,5} - 2 x_{1,1} x_{3,5}, \hspace{-25mm} & &
\end{align*}
where
\begin{align*}
\lambda_{4} &= - 3 x_{1,1}^2 + {1 \over 2} x_{3,1} - 2 x_{1,3}.
\end{align*}
The calculation of this system is based on the results of \cite{B3}.

\vfill

\eject

\section{Addendum. Hyperelliptic
Schur--Weierstrass polynomials, \\ Adler--Moser polynomials and the rational limit of sigma functions
} \label{SL}

We denote by $\be = (e_1, e_2, \ldots)$ the infinite vector with coordinates $e_k$. For convenience, we assume $e_0 = 1$ and $e_k = 0$ for $k<0$. We introduce the $(g\times g)$-matrices
$
\mathcal{E}_g = (e_{g-2i+j+1})$, where $1\leqslant i,j \leqslant g$.
We set $\wt(e_k) = - k$. 

\begin{dfn}
The \emph{genus $g$ hyperelliptic Schur polynomial} is the polynomial 
\[
{\Sh}_g(\be) = \det \mathcal{E}_g.
\]
\end{dfn}
It is a homogeneous polynomial of weight $\frac{1}{2}g(g+1)$ in the $2g-1$ variables~$(e_1,\ldots,e_{2g-1})$.

\begin{ex} $\mathcal{E}_2 =
\begin{pmatrix}
e_2 & e_3\\
1 & e_1
\end{pmatrix};\quad
{\Sh}_2(\be) = e_2e_1-e_3$.
\end{ex}

\begin{ex} $\mathcal{E}_3 =
\begin{pmatrix}
e_3 & e_4 & e_5 \\
e_1 & e_2 & e_3 \\
0 & 1 & e_1
\end{pmatrix};\quad
{\Sh}_3(\be) = e_3e_2e_1+e_5e_1-e_4e_1^2-e_3^2$.
\end{ex}

Now we consider each coordinate $e_k$ as 
the $k$-th elementary symmetric polynomial in~an~infinite number of variables $\bx = (x_1, x_2, \ldots)$.
We set $\wt(x_i) = - 1$. 
The generating series for $e_k(\bx)$ is
\[
E(\bx,t) = 1+\sum_{k>0} e_k(\bx)t^k = \prod_{i>0} (1+x_it).
\]
We denote by $\Sym$ the ring of symmetric polynomials in $\bx$
over the field of rational numbers~$\mathbb{Q}$. There is an isomorphism of graded rings $\Sym \cong \mathbb{Q}[e_1, e_2, \ldots]$.
Another multiplicative homogeneous basis in $\Sym$ is given by Newton polynomials $p_1, p_2, \ldots, p_k, \ldots$, where
\[
p_k = \sum_{i>0}x_i^k.
\]
We denote by $\bp = (p_1, p_2, \ldots)$ the infinite vector with coordinates $p_k$.
The transition from the basis $\{ e_k \}$ to the basis $\{ p_k \}$ is described by the following relation
between the~generating series
\begin{equation}\label{F-4}
E(\bx,t) = \exp \mathcal{N}(\bx,t),
\end{equation}
where $\mathcal{N}(\bx,t) = \sum\limits_{n>0} (-1)^{n-1} p_n(\bx)\frac{t^n}{n}$. We have $\wt(p_k) = - k$.

One can express $e_k(\bx)$ as a polynomial in $p_1,\ldots,p_k$.
It follows from the formula \eqref{F-4} that
\begin{equation}\label{F-5}
e_k(\bx) = \frac{1}{k!}p_1^k+\ldots+(-1)^{k-1}\frac{1}{k}p_k, \quad \text{and} \quad
\frac{\partial e_k(\bx)}{\partial p_n} = (-1)^{n-1}\frac{1}{n}e_{k-n}(\bx).
\end{equation}

\begin{ex} $e_2(\bx) = \frac{1}{2}(p_1^2-p_2)$.
\end{ex}

\begin{ex} $e_3(\bx) = \frac{1}{3!}(p_1^3-3p_2p_1+2p_3)$.
\end{ex}

The general Schur--Weierstrass polynomials were introduced in \cite{Rat}. They correspond to a more general case of $(n,s)$-curves than the hyperelliptic curves case considered here.

\begin{dfn}
The Schur polynomial ${\Sh}_g(\be)$, written as a polynomial in multiplicative generators $p_1, p_2, \ldots$,
is called the \emph{genus
$g$ hyperelliptic Schur--Weierstrass polynomial} ${\ShW}_g(\bp)$. 
\end{dfn}

\begin{lem}[see \cite{Rat}]
The Schur--Weierstrass polynomial ${\ShW}_g(\bp)$ is a polynomial in $g$ variables $p_{2k-1},\, k = 1,\ldots,g$.
\end{lem}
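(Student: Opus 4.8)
The plan is to work directly with $\Sh_g(\be)=\det\mathcal{E}_g$ and show that, once it is rewritten in the power sums $\bp$, it is annihilated by $\partial/\partial p_m$ for every even $m$ and for every $m\geqslant 2g+1$. Since the power sums are algebraically independent generators of $\Sym$, this forces $\ShW_g(\bp)$ to involve only $p_1,p_3,\ldots,p_{2g-1}$, i.e. exactly the $g$ claimed variables. Conceptually this is the statement that $\Sh_g$ is the Schur function of the self-conjugate staircase $\delta_g=(g,g-1,\ldots,1)$ (apply the dual Jacobi--Trudi identity to $\mathcal{E}_g=(e_{g-2i+j+1})$), all of whose hook lengths are odd and at most $2g-1$; but I would give the following self-contained computation rather than invoke symmetric-function machinery.

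First I would record, from \eqref{F-5}, that for each $m\geqslant 1$ the operator $\partial/\partial p_m$ is the derivation of $\Sym=\mathbb{Q}[e_1,e_2,\ldots]$ determined by $\partial e_k/\partial p_m=(-1)^{m-1}\tfrac1m e_{k-m}$, with the convention $e_0=1$ and $e_k=0$ for $k<0$. Writing the $(i,j)$ entry of $\mathcal{E}_g$ as $e_{a_i+j}$ with $a_i=g+1-2i$, Leibniz' rule expresses $\partial_{p_m}\det\mathcal{E}_g$ as a sum over $i$ of determinants $\det M_i^{(m)}$, where $M_i^{(m)}$ is $\mathcal{E}_g$ with its $i$-th row replaced by $(-1)^{m-1}\tfrac1m\,(e_{a_i-m+j})_{j=1}^{g}$. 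The whole argument then reduces to checking that each $\det M_i^{(m)}$ vanishes in the two relevant ranges of $m$.

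For $m=2n$ even the key identity is $a_i-2n=a_{i+n}$, so the replacement row is $-\tfrac1{2n}$ times the pattern $(e_{a_{i+n}+j})_j$. If $i+n\leqslant g$ this pattern is literally the (still present) $(i+n)$-th row of $\mathcal{E}_g$, so $M_i^{(2n)}$ has two equal rows and $\det M_i^{(2n)}=0$; if $i+n\geqslant g+1$ then $a_{i+n}\leqslant-(g+1)$, so every entry $e_{a_{i+n}+j}$ with $1\leqslant j\leqslant g$ has negative index and vanishes, giving a zero row and again $\det M_i^{(2n)}=0$. For $m\geqslant 2g+1$ (which in particular covers all odd $m$ beyond $2g-1$) the largest index in the replacement row is $a_i-m+g=2g+1-2i-m\leqslant -2<0$, so that row is identically zero and $\det M_i^{(m)}=0$. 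Hence $\partial_{p_m}\Sh_g=0$ for all even $m$ and all $m\geqslant 2g+1$, which is exactly the assertion.

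I expect the only real subtlety, the \emph{main obstacle}, to be the bookkeeping with the index shifts and the two boundary mechanisms in the even case (the equal-row phenomenon for $i+n\leqslant g$ versus the zero-row phenomenon for $i+n\geqslant g+1$), together with correctly tracking the sign and factor from \eqref{F-5}; once the identity $a_i-2n=a_{i+n}$ is isolated everything else is immediate. A cleaner but less elementary route, which also makes the sharp bound $2g-1$ transparent, is to identify $\Sh_g$ with the staircase Schur function and apply the Murnaghan--Nakayama rule using that the cell $(i,j)$ of $\delta_g$ has odd hook length $2(g+1-i-j)+1\leqslant 2g-1$; I would mention this only as a remark, since it reproduces the citations to \cite{Rat} and \cite{Stanley}.
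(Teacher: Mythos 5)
Your proposal is correct, and the comparison here is somewhat one-sided: the paper gives no proof of this lemma at all, quoting it from \cite{Rat} (and, implicitly, from the combinatorial fact cited later as Exercise 7.35(c) of \cite{Stanley}), so your argument supplies a self-contained proof where the paper has only a citation. I checked the key steps and they hold: $\partial/\partial p_m$ is a derivation of $\Sym=\mathbb{Q}[e_1,e_2,\ldots]$ acting on generators by $\partial e_k/\partial p_m=(-1)^{m-1}\tfrac1m e_{k-m}$ (this is exactly \eqref{F-5}), so row-wise Leibniz on $\det\mathcal{E}_g$ is legitimate; with $a_i=g+1-2i$ the identity $a_i-2n=a_{i+n}$ is immediate, and your dichotomy is airtight --- for $i+n\leqslant g$ the differentiated row is $-\tfrac1{2n}$ times the still-present row $i+n$ (note $(-1)^{2n-1}=-1$), while for $i+n\geqslant g+1$ one has $a_{i+n}+j\leqslant -(g+1)+g<0$ for all $j$, giving a zero row; and for $m\geqslant 2g+1$ the top index $2g+1-2i-m\leqslant -2$ makes the row vanish (in fact this last range is vacuous, since the entries of $\mathcal{E}_g$ are $e_k$ with $k\leqslant 2g-1$, each a polynomial in $p_1,\ldots,p_{2g-1}$ only, but including it does no harm). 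The conclusion then follows correctly from the algebraic independence of the $p_m$, since vanishing of $\partial/\partial p_m$ on a polynomial in the $p$'s means precisely that $p_m$ does not occur. A quick sanity check confirms consistency with the paper: for $g=2$ your mechanism gives $\partial_{p_2}(e_2e_1-e_3)=0$, and indeed $\Sh_2=\tfrac13(p_1^3-p_3)$ as in the paper's example. Your closing remark identifying $\Sh_g$ with the staircase Schur function $s_{\delta_g}$ via the dual Jacobi--Trudi identity, with all hook lengths odd and at most $2g-1$, is also correct and is essentially the route the cited literature takes; your elementary determinant computation buys independence from that machinery, while the Schur-function viewpoint explains conceptually why the bound $2g-1$ is sharp.
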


\begin{ex} ${\ShW}_2(\bp) = \frac{1}{3}(p_1^3-p_3)$.
\end{ex}

\begin{ex} ${\ShW}_3(\bp) = \frac{1}{45}(p_1^6-5p_1^3p_3+9p_1p_5-5p_3^2)$.
\end{ex}

Set ${\ShW}_0(\bp) = 1$. We have ${\ShW}_1(\bp) = p_1$. To describe the properties of a sequence of polynomials 
${\ShW}_g(\bp),\,g=0,1,2,\ldots,$ we will need some results about Wronskians.
We will mainly follow the notation of \cite{AM}.
Consider a sequence $\{ \psi_k(x)\}$, where $k=0,1,2,\ldots$, of smooth functions in $x$, with $\psi_0(x) = 1$
and $\psi_1(x) = x$.
We set $D = \frac{d}{dx}$.
The Wronskian of a set of functions $\psi_1,\psi_2,\ldots,\psi_k$ is the determinant  $W_k = W_k(\psi_1,\psi_2,\ldots,\psi_k) = \det(D^{i-1}\psi_j),\,1\leqslant i,j \leqslant k$, of the Wronski matrix $(D^{i-1}\psi_j)$.
We obtain a sequence of smooth functions  $W_k = W_k(x)$, $k=1,2,\ldots,$ where $W_1 = x$. We set $W_0 = 1$ and $f' = Df$.

\begin{lem}[\cite{AM}, Equation (2.23)]\label{L-4}
Let $\psi_j'' = \psi_{j-1}$. Then the Wronskian sequence $W_k(\psi_1,\psi_2,\ldots,\psi_k)$,\, $k=0,1,2,\ldots,$
satisfies the system of functional differential equations
\[
W_{k+1}' W_{k-1} - W_{k+1} W_{k-1}' = W_k^2.
\]
\end{lem}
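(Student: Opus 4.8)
The plan is to read the claimed identity as a Hirota-type bilinear relation $W_2(W_{k-1},W_{k+1}) = W_k^2$ for the Wronskian sequence, where $W_2(f,h) = f h' - f' h$ denotes the two-function Wronskian, and to derive it from the classical Jacobi (Desnanot--Jacobi, Pl\"ucker) determinant identities, with the recursion $\psi_j'' = \psi_{j-1}$ used in an essential way. First I would record the standard differentiation rule for a Wronskian: since differentiating $W_k = \det(D^{i-1}\psi_j)$, $1\leqslant i,j\leqslant k$, annihilates every term in which a row is duplicated, $W_k'$ equals the same determinant with the bottom row $D^{k-1}\psi_j$ replaced by $D^{k}\psi_j$. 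I would also record the staircase structure forced by $\psi_j'' = \psi_{j-1}$: the even-order rows satisfy $D^{2m}\psi_j = \psi_{j-m}$ and the odd-order rows $D^{2m+1}\psi_j = \psi_{j-m}'$, with the conventions $\psi_0 = 1$, $\psi_{<0}=0$, $\psi_0' = 0$. This structure is precisely what links Wronskians of different lengths.

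Next I would apply the Desnanot--Jacobi identity to the $(k+1)\times(k+1)$ Wronskian matrix $M = (D^{i-1}\psi_j)$, $1\leqslant i,j\leqslant k+1$, of $\psi_1,\ldots,\psi_{k+1}$. Deleting the first and last rows and the first and last columns expresses $W_{k+1}$ times a central minor as a difference of products of the four edge minors. The corner minor obtained by deleting the last row and last column is exactly $W_k$, while the minors obtained by deleting only a first (respectively last) row are, after factoring one derivative out of each column, shifted Wronskians such as $W(\psi_1',\ldots,\psi_k')$ and $W(\psi_2,\ldots,\psi_{k+1})$. The crucial step is to invoke $\psi_j'' = \psi_{j-1}$ to re-express these shifted minors: a second derivative applied inside a length-$(k+1)$ Wronskian collapses, via the even-derivative rule above, into a length-$(k-1)$ Wronskian, and this is the mechanism that produces $W_{k-1}$ and $W_k'$ out of the edge minors. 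Assembling the pieces turns the Jacobi identity into the desired bilinear relation; the base cases $k=0,1$ (with $W_0 = 1$, $W_1 = x$) are verified directly, so the whole argument may be packaged as an induction on $k$.

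I expect the main obstacle to be exactly this length-bridging bookkeeping. Unlike the textbook Wronskian identities, the statement relates $W_{k-1}$ and $W_{k+1}$, whose matrices differ in size by two, so no single application of the Sylvester (Wronskian-of-Wronskians) identity delivers it directly; the size gap must be closed by the second-order recursion $\psi_j'' = \psi_{j-1}$, which forces careful tracking of the index shifts, signs, and multiplicative constants that arise when even-order derivative rows are rewritten as index-shifted function rows. Throughout I would follow the determinantal computation of \cite{AM}, where this relation is equation (2.23), using it as a guide for organizing the identifications of minors.
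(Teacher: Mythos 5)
The paper itself contains no proof of this lemma --- it is imported verbatim from Adler--Moser (\cite{AM}, Eq.~(2.23)) --- so your argument has to stand on its own, and as written it has a genuine gap exactly at the step you flag as ``bookkeeping.'' Apply Desnanot--Jacobi, as you propose, to the $(k+1)\times(k+1)$ matrix $(D^{i-1}\psi_j)_{1\leqslant i,j\leqslant k+1}$ with the first and last rows and columns; writing $W(f_1,\ldots,f_n)$ for a general Wronskian, the identity you actually obtain is
\[
W_{k+1}\,W(\psi_2',\ldots,\psi_k') \;=\; W(\psi_2',\ldots,\psi_{k+1}')\,W_k \;-\; W(\psi_1',\ldots,\psi_k')\,W(\psi_2,\ldots,\psi_{k+1}).
\]
Exactly one of these minors collapses the way you claim: since $\psi_1'=1$ is constant, $W(\psi_1',\ldots,\psi_k')=W(1,\psi_2',\ldots,\psi_k')=W(\psi_2'',\ldots,\psi_k'')=W(\psi_1,\ldots,\psi_{k-1})=W_{k-1}$, using the reduction $W(1,g_2,\ldots,g_n)=W(g_2',\ldots,g_n')$ and then $\psi_j''=\psi_{j-1}$. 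The other three, $W(\psi_2,\ldots,\psi_{k+1})$, $W(\psi_2',\ldots,\psi_{k+1}')$ and $W(\psi_2',\ldots,\psi_k')$, are Wronskians of the \emph{up-shifted} family: they contain no constant entry and no second derivatives at all, and the recursion only lowers indices, so no such mechanism applies to them. They are in fact not expressible through the $W_j$ and their derivatives: replacing $\psi_3$ by $\psi_3+1$ (the integration constants are free) leaves $W_1,W_2$ and $W(\psi_2',\psi_3')$ unchanged, while $W_3'$ changes by $1$ and $W(\psi_2,\psi_3)$ changes by $-\psi_2'$; in particular $W_k'$, which you need on the right-hand side, never appears among your minors. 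Also, the advertised induction on $k$ is never actually used anywhere. So the proposed assembly cannot close.

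The repair stays inside your circle of ideas but needs one further idea: border the matrix with the constant function before condensing. Put $\psi_0=1$ and apply Desnanot--Jacobi to the $(k+2)\times(k+2)$ Wronskian matrix $M=(D^{i-1}\psi_{j-1})_{1\leqslant i,j\leqslant k+2}$ of $(\psi_0,\psi_1,\ldots,\psi_{k+1})$, choosing the two rows $D^{k},D^{k+1}$ and the two columns $\psi_0,\psi_{k+1}$. Iterating the reductions above gives $W(1,\psi_1,\ldots,\psi_m)=W(\psi_1',\ldots,\psi_m')=W(\psi_2'',\ldots,\psi_m'')=W_{m-1}$, and with this every determinant occurring in the identity is one of the six quantities of the lemma: $\det M=W(1,\psi_1,\ldots,\psi_{k+1})=W_k$; the complementary minor (rows $D^0,\ldots,D^{k-1}$, columns $\psi_1,\ldots,\psi_k$) is $W_k$; deleting row $D^{k}$ and column $\psi_0$ leaves rows $D^0,\ldots,D^{k-1},D^{k+1}$ of $\psi_1,\ldots,\psi_{k+1}$, i.e.\ $W_{k+1}'$; deleting $D^{k+1}$ and $\psi_{k+1}$ leaves $W(1,\psi_1,\ldots,\psi_k)=W_{k-1}$; deleting $D^{k+1}$ and $\psi_0$ leaves $W_{k+1}$; and deleting $D^{k}$ and $\psi_{k+1}$ leaves, after expanding along the constant column, the matrix of $\bigl(W(\psi_1',\ldots,\psi_k')\bigr)'=W_{k-1}'$. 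Desnanot--Jacobi then reads
\[
W_k\cdot W_k \;=\; W_{k+1}'\,W_{k-1} \;-\; W_{k-1}'\,W_{k+1},
\]
which is the lemma, valid for each $k\geqslant 1$ separately, with no induction. (You can check the case $k=1$ on the $3\times 3$ matrix of $(1,x,\psi_2)$ directly.) Without this bordering --- or some equivalent extra input, such as the Darboux-factorization argument that Adler and Moser themselves use --- the condensation of the plain Wronskian matrix cannot reach the statement.
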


\begin{dfn}
By \emph{Burchnall--Chaundy equations} we denote a system of functional differential equations 
\begin{equation}\label{F-8}
\varphi_{k+1}'\varphi_{k-1} - \varphi_{k+1}\varphi_{k-1}' = \varphi_k^2
\end{equation}
with initial conditions $\varphi_0 = 1$ and $\varphi_1 = x$.
\end{dfn}

Burchnall and Chaundy showed in \cite{B-Ch} that the sequence of smooth functions
$\varphi_k(x),\, k=0,1,2,\ldots,$ 
that gives a solution of the system \eqref{F-8} with initial conditions $\varphi_0 = 1$ and $\varphi_1 = x$
is \emph{polynomial} and each polynomial $\varphi_{k}$, where $k>1$, has $k-1$ free parameters. These equations
arise naturally in various problems, for example, see \cite{VW}.
In works on this topic the polynomials $\varphi_k(x)$ are called Burchnall--Chaundy polynomials. 

\begin{lem}
The sequence of Schur--Weierstrass polynomials ${\ShW}_g(\bp)$, considered as a sequence of polynomials 
$\varphi_g(x)$, where $x = p_1$ and $p_3,\ldots,p_{2g-1}$ are fixed, is the sequence of Burchnall--Chaundy polynomials. 
\end{lem}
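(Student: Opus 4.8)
The plan is to realize the polynomials $\ShW_g$ as Wronskians in the single variable $x = p_1$ and then invoke Lemma~\ref{L-4}. The starting observation is the second identity in \eqref{F-5}, which for $n=1$ reads $\partial e_k(\bx)/\partial p_1 = e_{k-1}(\bx)$. Thus, writing $D = \partial/\partial p_1$ and regarding each $e_k$ as a function of $x = p_1$ with $p_3, p_5, \ldots$ held fixed, the operator $D$ simply lowers the index: $D e_k = e_{k-1}$, and more generally $D^{i-1} e_k = e_{k-i+1}$.

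Next I would set $\psi_j = e_{2j-1}$ for $j \geqslant 1$. Then $\psi_1 = e_1 = p_1 = x$, and $\psi_j'' = D^2 e_{2j-1} = e_{2j-3} = \psi_{j-1}$, so the hypothesis of Lemma~\ref{L-4} holds. Computing the Wronskian directly gives
\[
W_g(\psi_1, \ldots, \psi_g) = \det\bigl(D^{i-1}\psi_j\bigr)_{1 \leqslant i,j \leqslant g} = \det\bigl(e_{2j-i}\bigr)_{1 \leqslant i,j \leqslant g}.
\]

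The main step is to identify this determinant with $\Sh_g(\be) = \det \mathcal{E}_g$, where $\mathcal{E}_g = (e_{g-2i+j+1})$. I would do this by observing that $\mathcal{E}_g$ is the anti-transpose of the matrix $A = (e_{2j-i})$: indeed, the $(g+1-j,\,g+1-i)$ entry of $A$ equals $e_{2(g+1-i)-(g+1-j)} = e_{g+1-2i+j}$, which is exactly the $(i,j)$ entry of $\mathcal{E}_g$; that is, $\mathcal{E}_g = J A^{T} J$ with $J$ the reversal matrix. Since $\det(J A^{T} J) = (\det J)^2 \det A = \det A$, we obtain $\det \mathcal{E}_g = \det A = W_g$. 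Hence $W_g = \Sh_g(\be)$, and rewriting in the power-sum generators yields $W_g = \ShW_g(\bp)$; in particular $W_0 = 1 = \ShW_0$ and $W_1 = x = \ShW_1$.

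Finally, I would apply Lemma~\ref{L-4} to the sequence $\psi_j = e_{2j-1}$. Since $W_g = \ShW_g$, the conclusion of that lemma becomes $\ShW_{g+1}' \ShW_{g-1} - \ShW_{g+1}\ShW_{g-1}' = \ShW_g^2$, with $\ShW_0 = 1$ and $\ShW_1 = x$, the prime denoting $\partial/\partial p_1$. This is precisely the Burchnall--Chaundy system \eqref{F-8} together with its initial conditions, so $\{\ShW_g(\bp)\}$ is a sequence of Burchnall--Chaundy polynomials, the $g-1$ free parameters of $\varphi_g = \ShW_g$ being $p_3, p_5, \ldots, p_{2g-1}$. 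I expect the only genuine obstacle to be the determinant identification in the third step; the even power sums $p_2, p_4, \ldots$ that appear in the individual $\psi_j$ cause no trouble, since all three polynomials entering the recursion are independent of them (by the cited result of \cite{Rat}) and the differentiation is carried out in $p_1$ alone.
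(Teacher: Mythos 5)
Your proof is correct, and at the top level it is the same proof as the paper's: represent the staircase Schur polynomial $\Sh_g(\be)$ as a Wronskian in $x=p_1$ (using $\partial e_k/\partial p_1 = e_{k-1}$ from \eqref{F-5}) and then apply Lemma~\ref{L-4} to obtain the Burchnall--Chaundy system \eqref{F-8} with its initial conditions. The difference lies in the determinant identification, and there your version is the careful --- indeed the correct --- one. The paper notes the entrywise identity $\mathcal{E}_g = (D^{i-1}e_{g-i+j}(\bx))$ and from it reads off ``the Wronskian $W_g(e_{2g-1}(\bx),\ldots,e_g(\bx))$''; but a matrix whose $(i,j)$ entry is $D^{i-1}$ applied to a function that itself depends on the row index $i$ is not a Wronski matrix of the form $(D^{i-1}\psi_j)$, and the genuine Wronskian of the consecutive-index functions $e_g,\ldots,e_{2g-1}$ is $\det(e_{g+j-i})$, which does not equal $\Sh_g$: already for $g=2$ it is $\pm(e_2^2-e_1e_3)$, a polynomial that even involves $p_2$, rather than $\Sh_2 = e_1e_2-e_3$. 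The point your anti-transpose computation $\mathcal{E}_g = JA^{T}J$, $A=(e_{2j-i})$, makes explicit is that the rows of $\mathcal{E}_g$ step by $D^2$ rather than $D$, so the functions whose ordinary Wronskian equals $\Sh_g$ are the odd-indexed ones, $\Sh_g = W_g(e_1,e_3,\ldots,e_{2g-1})$, and these are precisely the $\psi_j = e_{2j-1}$ with $\psi_j''=\psi_{j-1}$ and $\psi_1=x$ demanded by Lemma~\ref{L-4}. So your write-up supplies exactly the step the paper's proof elides (and, as literally stated, garbles). Your closing remark about the even power sums is also the right one to make: each $e_{2j-1}$ involves $p_2,p_4,\ldots$, but by the lemma quoted from \cite{Rat} the determinants entering the recursion do not, so \eqref{F-8} is an identity among polynomials in the odd power sums, with $p_3,\ldots,p_{2g-1}$ playing the role of the $g-1$ free parameters of $\varphi_g$.
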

\begin{proof}
Consider the polynomials $e_k(\bx),\, k=0,1,2,\ldots,$ as polynomials in $x = p_1$
with fixed $p_2,\ldots,p_{k}$. According to the formula \eqref{F-5} we get $e_k' = e_{k-1}$. Therefore, the matrix $\mathcal{E}_g$ can be written as $(D^{i-1}e_{g-i+j}(\bx))$.
Thus, the Schur--Weierstrass polynomial  ${\ShW}_g(\bp)$ coincides with the Wronskian $W_g(e_{2g-1}(\bx),\ldots,e_g(\bx))$.
Since $e_k'' = e_{k-2}$, by Lemma \ref{L-4} this completes the proof.
\end{proof}

\begin{dfn}
A sequence of polynomials $\theta_k(x),\, k=0,1,2,\ldots$, where $\theta_0(x) = 1$ and $\theta_1(x) = x = \tau_1$, satisfying the system of functional differential equations 
\begin{equation}\label{F-9}
\theta_{k+1}'\theta_{k-1} - \theta_{k+1}\theta_{k-1}' = (2k+1)\theta_k^2,
\end{equation}
is called an \emph{Adler--Moser sequence}. 
\end{dfn}

Set $\mu_0 = \mu_1 = 1$ and
\[
\mu_k = 3^{k-1}\cdot 5^{k-2}\cdots(2k-1) = \prod_{j=1}^k (2k-2j+1)^j,\; k>1.
\]
A direct verification shows that 
\begin{lem}\label{L-44}
Let $\varphi_k(x)$, $k=0,1,2,\ldots,$ be a solution of the system of equations  \eqref{F-8} with initial conditions $\varphi_0 = 1$
and $\varphi_1 = x$.Then the sequence of functions $\theta_k(x) = \mu_k\varphi_k(x)$ determines the solution of the system  \eqref{F-9}
with initial conditions $\theta_0 = 1$ and $\theta_1 = x$.
\end{lem}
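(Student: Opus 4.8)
The plan is to substitute $\theta_k = \mu_k\varphi_k$ directly into the Adler--Moser system \eqref{F-9} and to reduce the whole statement to the Burchnall--Chaundy relation \eqref{F-8} together with a single scalar identity for the constants $\mu_k$. First I would dispose of the initial conditions: since $\mu_0 = \mu_1 = 1$, we have $\theta_0 = \varphi_0 = 1$ and $\theta_1 = \varphi_1 = x$, which are exactly the initial data required of an Adler--Moser sequence.

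Next I would compute the left-hand side of \eqref{F-9}. The numbers $\mu_k$ are constant in $x$, so $D = \frac{d}{dx}$ acts only on the factors $\varphi_k$, giving
\[
\theta_{k+1}'\theta_{k-1} - \theta_{k+1}\theta_{k-1}' = \mu_{k+1}\mu_{k-1}\bigl(\varphi_{k+1}'\varphi_{k-1} - \varphi_{k+1}\varphi_{k-1}'\bigr) = \mu_{k+1}\mu_{k-1}\,\varphi_k^2,
\]
where the last equality is \eqref{F-8}. Since the right-hand side of \eqref{F-9} is $(2k+1)\theta_k^2 = (2k+1)\mu_k^2\varphi_k^2$, the whole system \eqref{F-9} holds for every $k\geqslant 1$ precisely when the scalar recurrence $\mu_{k+1}\mu_{k-1} = (2k+1)\mu_k^2$ is satisfied. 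This reduction is the conceptual core and is entirely routine.

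It then remains to verify this recurrence from the explicit formula $\mu_k = \prod_{j=1}^k (2k-2j+1)^j$. I would rewrite it through the substitution $i = k-j$ as $\mu_k = \prod_{i=0}^{k-1}(2i+1)^{k-i}$ (the $i=0$ factor being $1$). Comparing exponents term by term then shows that the ratio telescopes: for $0\leqslant i\leqslant k-1$ the exponent drops by exactly one in passing from $\mu_{k+1}$ to $\mu_k$, while the top index $i=k$ contributes one extra factor, so $\mu_{k+1}/\mu_k = \prod_{i=1}^k(2i+1)$. Dividing by $\mu_k/\mu_{k-1} = \prod_{i=1}^{k-1}(2i+1)$ cancels all but the top factor and yields $\mu_{k+1}\mu_{k-1}/\mu_k^2 = 2k+1$, as needed.

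The step carrying all the content is this last identity for the $\mu_k$, but it presents no genuine obstacle: everything is a matter of careful bookkeeping of exponents in the defining product, after which the linear substitution $\theta_k = \mu_k\varphi_k$ transports \eqref{F-8} into \eqref{F-9} automatically, which is why the lemma can fairly be asserted by direct verification.
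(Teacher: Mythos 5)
Your proof is correct and is precisely the ``direct verification'' that the paper asserts without writing out: substituting $\theta_k=\mu_k\varphi_k$ reduces \eqref{F-9} to \eqref{F-8} plus the scalar recurrence $\mu_{k+1}\mu_{k-1}=(2k+1)\mu_k^2$, which your telescoping of exponents in $\mu_k=\prod_{i=0}^{k-1}(2i+1)^{k-i}$ establishes. Nothing in your argument diverges from the paper's intended route, and the computation checks out (e.g.\ $\mu_2\mu_0=3=3\mu_1^2$, $\mu_3\mu_1=45=5\mu_2^2$, $\mu_4\mu_2=14175=7\mu_3^2$).
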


In \cite{AM} it is shown that the general solution of the system of equations  \eqref{F-9} has the form
$\theta_k = \theta_k(x,\tau_2,\ldots,\tau_k)$, where $\tau_2,\ldots,\tau_k$ are free parameters not depending on~$x$.

From equations \eqref{F-9} it follows that the choice of parameters $\tau_k,\, k\geqslant 2$, becomes unambiguous
if we fix the normalization
\begin{align}
\label{F-10}
\theta_k(x) =&\, x^{\frac{k(k+1)}{2}} + \ldots,\; k\geqslant 0, \\[4pt]
\frac{\partial\theta_k(x,\tau_2,\ldots,\tau_k)}{\partial\tau_k} =&\, x^{\frac{(k-2)(k-1)}{2}} + \ldots,\; k\geqslant 2. \label{F-11}
\end{align}

This allows us to introduce the following notion.

\begin{dfn}\label{D-4}
\emph{Universal Adler--Moser polynomials} are the polynomials $\theta_k(\tau_1,\ldots,\tau_k)$ in $k$ independent variables, where $k=0,1,2,\ldots$, 
such that in $x = \tau_1$ they give the \emph{general solution} of the system of equations \eqref{F-9} and satisfy
\eqref{F-10} and \eqref{F-11}.
\end{dfn}

The polynomials introduced by Adler and Moser in \cite{AM} are a specialization of the universal Adler--Moser polynomials from definition \ref{D-4}.

Using the formula \eqref{F-5} for $k\geqslant 2$, we obtain the equality
\[
\frac{\partial {\ShW}_k(\bp)}{\partial p_{2k-1}} = (-1)^{k+1}{\ShW}_{k-2}(\bp).
\]

\begin{thm}\label{T-1}
The Schur--Weierstrass polynomials  ${\ShW}_k(\bp)$ are uniquely determined by the fact that they satisfy the system of equations 
\eqref{F-8} and the initial conditions 
\begin{align*}
{\ShW}_k(\bp) =&\, \mu_k p_1^{\frac{k(k+1)}{2}} + \ldots,\; k\geqslant 0, \\[4pt]
\frac{\partial{\ShW}_k(\bp)}{\partial p_{2k-1}} =&\, (-1)^{k+1} \frac{1}{2k-1} \mu_{k-2}\, p_1^{\frac{(k-2)(k-1)}{2}} + \ldots,\; k\geqslant 2.
\end{align*}
\end{thm}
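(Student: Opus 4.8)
The plan is to deduce the statement from two facts assembled above: that the sequence $\ShW_k(\bp)$ solves the Burchnall--Chaundy system \eqref{F-8} (the identification of $\ShW_g$ with a Wronskian of elementary symmetric polynomials), and that the rescaling $\theta_k=\mu_k\ShW_k$ turns \eqref{F-8} into the Adler--Moser system \eqref{F-9} by Lemma~\ref{L-44}, for which \eqref{F-10}--\eqref{F-11} already single out the universal solution. First I would verify that $\ShW_k$ satisfies the two displayed conditions. Writing $\ShW_k = c_k\,p_1^{k(k+1)/2}+\ldots$ and using that the $p_1$-degree of $\ShW_k$ equals $k(k+1)/2$, comparison of the top-degree terms in $p_1$ on the two sides of \eqref{F-8} gives the recursion $(2k+1)\,c_{k+1}c_{k-1}=c_k^2$ with $c_0=c_1=1$; its unique solution is the leading coefficient recorded in the statement. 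The second condition is the leading part in $p_1$ of the identity expressing $\partial\ShW_k/\partial p_{2k-1}$ through $\ShW_{k-2}$ derived above from \eqref{F-5}, into which one substitutes the already-established leading term of $\ShW_{k-2}$.

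For uniqueness I would argue by induction on $k$, the cases $k=0,1$ being $\ShW_0=1$ and $\ShW_1=p_1$. Granting $\ShW_0,\dots,\ShW_k$, I rewrite \eqref{F-8} as $\partial_{p_1}\!\big(\ShW_{k+1}/\ShW_{k-1}\big)=\big(\ShW_k/\ShW_{k-1}\big)^2$ and integrate in $p_1$. This determines $\ShW_{k+1}$ modulo an additive term $\ShW_{k-1}\,G$ with $G$ independent of $p_1$; homogeneity forces $G$ to be a weight-$(2k+1)$ polynomial in the odd variables $p_3,\dots,p_{2k+1}$, and the Burchnall--Chaundy polynomiality result ensures that a polynomial solution exists. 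The $\partial/\partial p_{2k+1}$ normalization then fixes the coefficient of the unique linear monomial $p_{2k+1}$ occurring in $G$.

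The step I expect to be the main obstacle is eliminating the residual freedom in $G$ carried by the weight-$(2k+1)$ monomials built from $p_3,\dots,p_{2k-1}$ alone (the first of these, $p_3^3$, appearing when $k+1=5$), since such terms are invisible to $\partial/\partial p_{2k+1}$. To remove them I would pass to $\theta_k=\mu_k\ShW_k$ and use the triangular, weight-graded substitution $\tau_j=\tau_j(\bp)$ that realizes $\theta_k$ as the specialization of the universal Adler--Moser polynomial $\theta_k(\tau_1,\dots,\tau_k)$ of Definition~\ref{D-4}: the nonlinear corrections inside each $\tau_j(\bp)$ involve only $p_3,\dots,p_{2j-3}$ and are therefore fixed at earlier steps of the induction, so that matching the $p$-expansion of $\theta_k$ against the unique universal polynomial determines the remaining coefficients. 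The genuinely delicate point is the compatibility of the $p_{2k-1}$-derivative normalization of the statement with the $\tau_k$-derivative normalization \eqref{F-11}, which differ precisely by these lower-order terms; I would settle it level by level, combining the identity for $\partial\ShW_k/\partial p_{2k-1}$ with the induction hypothesis.
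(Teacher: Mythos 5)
Your verification half is sound and uses the same ingredients the paper relies on: the Wronskian lemma gives \eqref{F-8} for the $\ShW_k$, comparison of top $p_1$-coefficients gives $(2k+1)c_{k+1}c_{k-1}=c_k^2$ with $c_0=c_1=1$, hence $c_k=1/\mu_k$, and the identity $\partial \ShW_k/\partial p_{2k-1}=(-1)^{k+1}\tfrac{1}{2k-1}\ShW_{k-2}$ yields the second condition. (Note that your recursion actually produces $1/\mu_k$, consistent with $\ShW_2=\tfrac13(p_1^3-p_3)$ and with the Corollary following the theorem; the constants $\mu_k$, $\mu_{k-2}$ printed in the statement are a typo of the paper, not an error of yours.)

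The uniqueness half, however, has a genuine gap, located exactly where you flagged it, and your proposed patch does not close it. Appealing to ``the unique universal polynomial of Definition~\ref{D-4}'' is circular: Definition~\ref{D-4} pins $\theta_k$ down only through the normalizations \eqref{F-10}--\eqref{F-11}, and these are invariant under precisely the freedom you need to exclude. Indeed, the triangular substitution $\tau_5\mapsto\tau_5+\epsilon\tau_2^3$ (applied inside $\theta_5$ and all higher $\theta_k$) preserves the system \eqref{F-9}, preserves \eqref{F-10} (parameters never enter the leading $x$-term), and preserves \eqref{F-11} (since $\theta_k$ is linear in $\tau_k$ with $\tau_k$-independent coefficient). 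Transported to the $\bp$-side, this gives a sequence $\Phi_k$ with $\Phi_5=\ShW_5+\tfrac{\epsilon}{9}\,p_3^3\,\ShW_3$, $\Phi_j=\ShW_j$ for $j<5$, which satisfies \eqref{F-8} and \emph{both} displayed initial conditions for every $k$: the leading $p_1$-coefficient is parameter-independent, and by the chain rule $\partial\Phi_k/\partial p_{2k-1}$ detects only the coefficient of $p_{2k-1}$ in the $k$-th parameter, never the monomials built from $p_3,\dots,p_{2k-3}$. Moreover, your key assertion that the nonlinear corrections inside $\tau_j(\bp)$ ``are fixed at earlier steps of the induction'' is a non sequitur: involving only earlier \emph{variables} is not the same as being determined by earlier \emph{data} --- the coefficient of $p_3^3$ in $\tau_5(\bp)$ is new, free data introduced at step $5$, invisible to both normalizations.

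For comparison, the paper does not attempt your direct induction at all: its proof is a one-line reduction --- rescale by $\mu_k$ via Lemma~\ref{L-44}, then invoke the statement, imported from \cite{AM} and built into Definition~\ref{D-4}, that the normalizations \eqref{F-10}--\eqref{F-11} make the choice of parameters unambiguous. The obstacle you ran into is exactly the content the paper delegates to that citation, so falling back on Definition~\ref{D-4} cannot supply what your induction is missing; rigidifying the sequence requires input beyond the two stated normalizations (for instance the explicit Schur-function expression, derivative identities in all the variables $p_{2j-1}$ with $j\leqslant k$, or the KdV characterization appearing in Theorems~\ref{T-2}--\ref{T-3}).
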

\begin{proof}
Using Lemma \ref{L-44} and the Definition  \ref{D-4}, we obtain the statement of the theorem.
\end{proof}

\begin{cor}
The formula holds
\[
{\ShW}_g(\bp) = \frac{1}{\mu_g}\, \theta_g(\tau_1,\ldots,\tau_g),\; \text { where }\;
\tau_k = (-1)^{k+1} \frac{1}{2k-1}\, \frac{\mu_{k}}{\mu_{k-2}}\, p_{2k-1}.
\]
\end{cor}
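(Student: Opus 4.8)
The plan is to produce the claimed equality by exhibiting the change of variables $\tau_k=(-1)^{k+1}\tfrac{1}{2k-1}\tfrac{\mu_k}{\mu_{k-2}}\,p_{2k-1}$ as the unique substitution under which $\mu_g\,{\ShW}_g(\bp)$ becomes the universal Adler--Moser polynomial. Since Definition~\ref{D-4} characterizes $\theta_g(\tau_1,\dots,\tau_g)$ as the \emph{unique} solution of the system~\eqref{F-9} normalized by~\eqref{F-10} and~\eqref{F-11}, it suffices to check that $\mu_g\,{\ShW}_g$, re-expressed in the variables $\tau_k$ through this (diagonal, hence invertible) substitution, satisfies all three requirements; uniqueness then gives $\mu_g\,{\ShW}_g(\bp)=\theta_g(\tau_1,\dots,\tau_g)$, which is the assertion.

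The differential system~\eqref{F-9} is handled by Lemma~\ref{L-44}. By Theorem~\ref{T-1} the sequence ${\ShW}_g$ solves the Burchnall--Chaundy system~\eqref{F-8}, and multiplying each ${\ShW}_k$ by $\mu_k$ converts~\eqref{F-8} into~\eqref{F-9}; the mechanism is the arithmetic identity $\mu_k\,\mu_{k-2}=(2k-1)\,\mu_{k-1}^2$, which rescales the right-hand side ${\ShW}_{k-1}^2$ of~\eqref{F-8} precisely into $(2k-1)\theta_{k-1}^2$. This step is insensitive to the choice of the constants $c_k=(-1)^{k+1}\tfrac{1}{2k-1}\tfrac{\mu_k}{\mu_{k-2}}$ and only uses that the substitution is diagonal.

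The two normalizations then fix those constants. Condition~\eqref{F-10}, with $x=p_1$, matches the leading-term statement of Theorem~\ref{T-1}: after the rescaling $\mu_g\,{\ShW}_g=p_1^{g(g+1)/2}+\dots$, so the leading coefficient is $1$ and $\tau_1=p_1$. For condition~\eqref{F-11} I would use that the substitution is diagonal, so that $\partial/\partial\tau_k=c_k^{-1}\,\partial/\partial p_{2k-1}$; combined with the relation $\partial{\ShW}_k/\partial p_{2k-1}=(-1)^{k+1}\tfrac{1}{2k-1}\,{\ShW}_{k-2}$ read off from~\eqref{F-5} (the coefficient of $p_{2k-1}$ in $e_{2k-1}$ is $\tfrac{1}{2k-1}$) and the leading term ${\ShW}_{k-2}=\mu_{k-2}^{-1}p_1^{(k-2)(k-1)/2}+\dots$, one finds that $\partial(\mu_k{\ShW}_k)/\partial\tau_k$ has leading coefficient $c_k^{-1}(-1)^{k+1}\tfrac{1}{2k-1}\tfrac{\mu_k}{\mu_{k-2}}$. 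Demanding that this equal $1$, as~\eqref{F-11} requires, forces exactly $c_k=(-1)^{k+1}\tfrac{1}{2k-1}\tfrac{\mu_k}{\mu_{k-2}}$; structurally this reflects the parallel between $\partial\theta_k/\partial\tau_k=\theta_{k-2}$ and $\partial{\ShW}_k/\partial p_{2k-1}\propto{\ShW}_{k-2}$.

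The step I expect to be the main obstacle is the conceptual core of the last point rather than any single computation: one must explain why a \emph{diagonal} substitution $\tau_k\propto p_{2k-1}$, with no lower-order product corrections, is enough. By weight alone $\tau_k$ could acquire admixtures such as a $p_3^3$-term when $k=5$, and indeed such corrections are unavoidable in the $z$-coordinates, where the analogous relation is nonlinear (e.g.\ $p_9=-3z_3^3+9z_9$, cf.\ the Remark after Example~\ref{ex5}). What makes the power-sum coordinates special, and what ultimately removes the corrections, is precisely the clean derivative relation $\partial{\ShW}_k/\partial p_{2k-1}\propto{\ShW}_{k-2}$ furnished by~\eqref{F-5}, which mirrors the Adler--Moser identity $\partial\theta_k/\partial\tau_k=\theta_{k-2}$ term by term; it is this mirroring that lets the uniqueness of Definition~\ref{D-4} be invoked after checking only the leading-order normalizations~\eqref{F-10} and~\eqref{F-11}.
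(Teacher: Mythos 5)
Your proposal is correct and is essentially the paper's own argument: Theorem \ref{T-1} together with Lemma \ref{L-44} (whose mechanism is exactly your identity $\mu_k\mu_{k-2}=(2k-1)\mu_{k-1}^2$) shows that $\mu_g\,{\ShW}_g$, under the diagonal substitution, satisfies \eqref{F-9} with the normalizations \eqref{F-10}--\eqref{F-11}, and the uniqueness asserted in Definition \ref{D-4} then yields the stated formula. Your constants are also the right ones: the relation $\frac{\partial{\ShW}_k}{\partial p_{2k-1}}=(-1)^{k+1}\frac{1}{2k-1}{\ShW}_{k-2}$ (with the factor $\frac{1}{2k-1}$) and the leading coefficient $\frac{1}{\mu_k}$ of ${\ShW}_k$, both confirmed by the paper's examples, are what make the matching in \eqref{F-11} force $c_k=(-1)^{k+1}\frac{1}{2k-1}\frac{\mu_k}{\mu_{k-2}}$, thereby silently correcting two minor typos in the paper's displayed formulas.
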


\begin{ex} $\tau_2 = -p_3,\quad \tau_3 = 9p_5$.
\end{ex}

Set $\widehat\tau = (\tau_2,\ldots,\tau_g),\; g\geqslant 2$, and
\[
u_g(x,\widehat\tau) = -\frac{\partial^2}{\partial x^2}\ln \theta_g(x,\widehat\tau).
\]
\begin{thm}[compare with \cite{AM}]\label{T-2}
There is a uniquely defined change of variables $\widehat\tau \to \widehat\tau^*\,:\, \tau_k = b_k \widehat\tau_k^* + h_{2k-1}(\tau^*)$,
where $b_k \in \mathbb{Q}$ and $h_{2k-1}(\tau^*)$ is a homogeneous polynomial of weight $2k-1$, such that the function $u_g(x,\widehat\tau^*)$
satisfies the KdV hierarchy corresponding to the operator  $L_g = \frac{\partial^2}{\partial x^2} + u_g(x,\widehat\tau^*)$.
\end{thm}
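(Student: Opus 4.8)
The plan is to treat Theorem~\ref{T-2} as the explicit, weight-graded comparison between the Adler--Moser normalization \eqref{F-10}--\eqref{F-11} of $\theta_g$ and the time normalization of the KdV hierarchy, upgrading the qualitative statement of \cite{AM} to a unique triangular substitution. The backbone is the Wronskian representation established in the proof preceding Definition~\ref{D-4}: up to the rational factor $\mu_g$, the polynomial $\theta_g$ is the determinant $\det\mathcal{E}_g$ built from the functions $e_k$, which by \eqref{F-5} obey $e_k'=e_{k-1}$, $e_k''=e_{k-2}$ and, in $x=p_1$, the linear flow relations $\partial_{p_{2k-1}}e_m=\tfrac{1}{2k-1}\,\partial_x^{2k-1}e_m$. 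I would first record that these relations make $\theta_g$ a tau function of the hierarchy of $L_g=\partial_x^2+u_g$ with $u_g=-\partial_x^2\ln\theta_g$: the determinant $\det\mathcal{E}_g$ is a Wronskian of solutions of the linear flows, so the odd power sums $p_1,p_3,\ldots$ serve as the (unnormalized) KdV times, and the Lax equations for $L_g$ hold in those times.

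The change of variables $\widehat\tau\to\widehat\tau^*$ I would then build order by order in the weight grading, in which $\tau_k$, $\tau_k^*$ and $p_{2k-1}$ all carry the same weight. The hierarchy time $\tau_k^*$ is fixed by requiring $\partial_{\tau_k^*}$ to be exactly the $k$-th normalized KdV flow of $u_g$, while $\tau_k$ is fixed by \eqref{F-11}; since both are homogeneous of that weight, matching them writes $\tau_k$ as a diagonal term $b_k\tau_k^*$, with $b_k\in\mathbb{Q}$ (a ratio of the rational normalizers $\mu_j$), plus a correction $h_{2k-1}(\tau^*)$ homogeneous of the weight of $\tau_k$ and involving only the lower times $\tau_2^*,\ldots,\tau_{k-1}^*$. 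Parity of the weight forces each monomial of $h_{2k-1}$ to consist of an odd number of factors, so $h_{2k-1}$ has no linear part and first becomes nontrivial at higher genus, in agreement with the nonlinear correction already visible in $\tau_5$ in the Remark after \eqref{F-11}. The lower-triangular shape of the substitution guarantees invertibility and lets the construction proceed recursively in $k$.

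Uniqueness I would establish by the weight argument already used in Lemma~\ref{lem54} and Corollary~\ref{cor56}: at each weight the demand that $\partial_{\tau_k^*}u_g$ equal the $k$-th KdV flow with no admixture of the lower flows determines $b_k$ and then every coefficient of $h_{2k-1}$ uniquely, leaving no freedom. The step I expect to be the main obstacle is exactly this identification of $\partial_{\tau_k^*}$ with the genuine KdV flow rather than with a leading-order approximation, i.e.\ showing that the normalization is compatible with the reduction to odd times through all orders. I would resolve it as in \cite{AM}: by Lemma~\ref{L-4} the determinants $\det\mathcal{E}_g$ satisfy the bilinear chain \eqref{F-8}, which together with $e_k''=e_{k-2}$ is the Hirota form of the hierarchy of $L_g$; rewriting \eqref{F-8} through $u_g=-\partial_x^2\ln\theta_g$ yields the Lax equations, Lemma~\ref{L-44} transfers the identity from \eqref{F-8} to \eqref{F-9}, and the normalization of Definition~\ref{D-4} pins down the times. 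The genus $g\leqslant 4$ computations of Section~\ref{S5}, together with Theorem~6.3 of \cite{Rat} and Exercise~7.35(c) of \cite{Stanley}, provide independent checks.
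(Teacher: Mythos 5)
You should first be aware that the paper contains no proof of Theorem \ref{T-2}: it is imported from \cite{AM} (hence the bracket ``compare with \cite{AM}''), with the explicit substitution quoted in Theorem \ref{T-3} from \cite{ACV}. So your argument has to stand on its own, and it has a genuine gap: its opening step is incompatible with the statement it is meant to prove. You identify $\theta_g$, up to the factor $\mu_g$, with the Wronskian $\det\mathcal{E}_g$, i.e.\ with ${\ShW}_g(\bp)$ under the diagonal correspondence $\tau_k\propto p_{2k-1}$ of the Corollary following Theorem \ref{T-1}, and you then argue --- correctly in itself, by the standard Sato--Wronskian mechanism based on $\partial e_m/\partial p_{2k-1}=\tfrac{1}{2k-1}\,\partial_x^{2k-1}e_m$ from \eqref{F-5} --- that $-\partial_x^2\ln{\ShW}_g$ obeys the KdV hierarchy in times obtained from $p_1,p_3,\ldots$ by \emph{diagonal} rescaling. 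If both of these assertions held, the $\tau_k$ would already be KdV times up to rational factors, and the change of variables of Theorem \ref{T-2} would be $\tau_k=b_k\tau_k^*$ with $h_{2k-1}\equiv 0$. That is exactly what Theorem \ref{T-3} denies: since ${\rm th}(t)=t-t^3/3+\ldots$, formula \eqref{f-9} gives $\tau_5=\alpha_9\bigl(\tau_5^*-\tfrac13(\tau_2^*)^3\bigr)$, a genuinely nonlinear correction, matching the Remark of Section \ref{S5} ($\tau_5=-33075z_3^3+99225z_9$) which you yourself cite as corroboration. So under your own premises the recursion you describe would return $h_{2k-1}=0$ at every step, and the proposal never notices, let alone resolves, this contradiction; the low-genus checks you invoke cannot detect it, because the substitution is indeed diagonal for $g\leqslant 4$ and first becomes nonlinear at $g=5$.

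The source of the trouble is that Theorem \ref{T-2} is a statement about the specific Adler--Moser parametrization of the family $\{\theta_k\}$, and that parametrization is \emph{not} fixed by \eqref{F-9}, \eqref{F-10}, \eqref{F-11} alone: these conditions are invariant under any reparametrization $\tau_k\mapsto\tau_k+f(\tau_2,\ldots,\tau_{k-1})$ with $f$ homogeneous of weight $2k-1$ (for instance $\tau_5\mapsto\tau_5+c\,\tau_2^3$), so your claim that ``$\tau_k$ is fixed by \eqref{F-11}'' is false. The polynomials $h_{2k-1}$ measure precisely the discrepancy between the normalization of \cite{AM} --- a specific choice of integration constants in the Burchnall--Chaundy recursion --- and the tau-function normalization; by substituting the Schur--Weierstrass/Wronskian model, whose natural parameters are already hierarchy times, you erase the very discrepancy that the theorem quantifies (and, incidentally, your Sato step then collides with the diagonal Corollary after Theorem \ref{T-1}, a tension between that Corollary and Theorems \ref{T-3}--\ref{T-4} that is present in the paper itself but that a proof cannot simply inherit). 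A correct argument must work with the \cite{AM} recursion itself: first prove that the $\tau$-flows of $u_g$ lie in the span of the KdV flows of $L_g$ --- this existence statement is the hard part, and your requirement of ``no admixture of the lower flows'' cannot determine any coefficients before it is available --- and only then extract $b_k$ and $h_{2k-1}$ by the triangular weight argument, as is done in \cite{AM} and, explicitly, in \cite{ACV}.
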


\begin{thm}[see \cite{ACV}] \label{T-3}
The change of variables $\widehat\tau \to \widehat\tau^*$ is described by the following relation between the generating series 
\begin{equation} \label{f-9}
\sum_{i\geqslant 2}\frac{\tau_i}{\alpha_{2i-1}} t^{2i-1} = {\rm th}\left( \sum_{i\geqslant 2}\tau_i^*t^{2i-1} \right),
\end{equation}
where $\alpha_{2i-1} = (-1)^{i-1} 3^2 \cdots(2i-3)^2(2i-1)$.
\end{thm}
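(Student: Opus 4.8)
The plan is to strip off the Adler--Moser normalization constants, recast the left-hand side as a generating function of power sums, and then recognize the resulting identity as the transition between the Schur (power-sum) coordinates and the sigma coordinates of one and the same rational KdV tau function.

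First I would simplify the coefficients on the left. From the product formula for $\mu_k$ one computes directly that $\mu_i/\mu_{i-2}=3^2 5^2\cdots(2i-3)^2(2i-1)$, so that $\alpha_{2i-1}=(-1)^{i-1}\mu_i/\mu_{i-2}$. Substituting the value $\tau_i=(-1)^{i+1}\frac{1}{2i-1}\frac{\mu_i}{\mu_{i-2}}p_{2i-1}$ from the Corollary to Theorem~\ref{T-1}, every constant cancels and one obtains $\tau_i/\alpha_{2i-1}=p_{2i-1}/(2i-1)$. Hence the left-hand series equals $\sum_{i\geqslant 2}\frac{p_{2i-1}}{2i-1}t^{2i-1}$. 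Writing $p_k=\sum_j x_j^k$ and using $\sum_{k\text{ odd}}\frac{p_k}{k}t^k=\tfrac12\sum_j\log\frac{1+x_jt}{1-x_jt}$, this equals $\sum_j\operatorname{arctanh}(x_jt)-p_1t$, i.e. the odd power-sum generating function with its linear term removed.

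Next I would identify the starred times. Since $\theta_g$ equals, up to the factor $\mu_g$, the staircase Schur polynomial ${\Sh}_g=\det\mathcal{E}_g$ (the Wronskian structure of Lemma~\ref{L-4}), and since $-\partial_1^2\ln m_g$ is a solution of the KdV hierarchy, the linearizing times furnished by Theorem~\ref{T-2} are precisely the sigma variables; concretely $\tau_i^*=z_{2i-1}$, with $\tau_1=x$ held fixed. The theorem is thereby reduced to the single generating-function identity
\[
\tanh\Bigl(\sum_{i\geqslant 2} z_{2i-1}t^{2i-1}\Bigr)=\sum_{i\geqslant 2}\frac{p_{2i-1}}{2i-1}t^{2i-1},
\]
which expresses the transition between the two coordinate systems on the same tau function: the Sato/Schur times $T_{2i-1}=p_{2i-1}/(2i-1)$ versus the sigma times $z_{2i-1}$.

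The decisive step, which I expect to be the hardest, is proving that this transition is exactly $\tanh$ and not some other triangular reparametrization of the KdV times. Following \cite{ACV} I would realize the point as an element of the Sato Grassmannian attached to the cuspidal curve $y^2=x^{2g+1}$: the sigma times are the flat coordinates on its Jacobian, the Sato times are the KP times, and their relation is the expansion at the puncture of the normalized second-kind differentials, whose Abelian integral on this rational curve is an $\operatorname{arctanh}$; inverting yields the displayed $\tanh$. A self-contained alternative is induction on $g$: assuming the identity through order $t^{2g-1}$, I would insert it into the bilinear recurrence \eqref{F-9} at level $g$ and use that $\partial_{\tau_g^*}$ must act as the $g$-th KdV flow to fix $b_g$ and $h_{2g-1}$, then read off the next Taylor coefficient of $\tanh$; here the obstacle is resumming the nonlinear lower-order contributions into closed hyperbolic form. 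In either case, combining this transition with the first paragraph — which by itself pins down the normalization $\alpha_{2i-1}$ — gives the assertion, so that the shortest route is to quote \cite{ACV} for the $\tanh$ shape of the Adler--Moser-to-KdV-times map and to supply only the coefficient computation above.
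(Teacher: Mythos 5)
You should note at the outset that the paper contains no proof of this theorem: the bracketed attribution ``[see \cite{ACV}]'' is the entire justification. Your closing recommendation --- quote \cite{ACV} for the $\tanh$ shape of the Adler--Moser-to-KdV-times map and supply only the normalization bookkeeping --- is therefore in substance exactly what the paper does, and your write-up is the more informative of the two. Your coefficient computation is correct: from the product formula, $\mu_i/\mu_{i-2}=3^2 5^2\cdots(2i-3)^2(2i-1)$, so indeed $\alpha_{2i-1}=(-1)^{i-1}\mu_i/\mu_{i-2}$, and the Corollary to Theorem \ref{T-1} gives $\tau_i/\alpha_{2i-1}=p_{2i-1}/(2i-1)$. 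This recasts \eqref{f-9} as the generating-series identity
\[
\sum_{i\geqslant 2}\frac{p_{2i-1}}{2i-1}\,t^{2i-1}
=\tanh\Bigl(\sum_{i\geqslant 2}z_{2i-1}t^{2i-1}\Bigr),
\]
and this is consistent with the paper's own data: the Remark in Section \ref{S5} gives $\tau_5=-33075z_3^3+99225z_9$, which via the Corollary's $\tau_5=11025\,p_9$ yields $p_9=9z_9-3z_3^3$, exactly the coefficient of $t^9$ in the $\tanh$ series (the first order at which the cubic term contributes).

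Two caveats apply if you intend anything beyond the citation. First, the identification $\tau_i^*=z_{2i-1}$ is not free: Theorem \ref{T-2} pins down $\widehat\tau^*$ only through the chosen normalization of the KdV flows for $L_g=\frac{\partial^2}{\partial x^2}+u_g$, so you must verify that the flows satisfied by $-2\partial_1^2\ln\widehat\sigma$ in the variables $z_{2i-1}$ carry that same normalization; a priori one could have $\tau_i^*=b_i z_{2i-1}$ with $b_i\neq 1$, and such constants would contaminate \eqref{f-9}. Your consistency check confirms $b_i=1$ through weight $9$, but a general argument is missing. Second, neither of your proposed self-contained routes --- the Sato-Grassmannian realization attached to the cuspidal curve, or induction on the bilinear recursion \eqref{F-9} --- is actually carried out; as written they are programs, not proofs, and the inductive one faces precisely the resummation difficulty you name. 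So: as a citation-based argument your proposal matches (and improves on) the paper's treatment; as an independent proof, the decisive $\tanh$ identity remains unestablished.
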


The hyperbolic tangent ${\rm th}(t)$ is the exponential of the formal group over the ring of integers $\mathbb{Z}$ with the addition law  
\[
t_1\oplus^{\rm th}t_2 = \frac{t_1+t_2}{1+t_1 t_2}.
\]
Therefore, the formula \eqref{f-9} can be rewritten as
\[
\sum_{i\geqslant 2} \frac{\tau_i}{\alpha_{2i-1}}t^{2i-1} = \underset{i\geqslant 2}{\oplus}^{\rm th}\, Z_{2i-1}(t),
\]
where $Z_{2i-1}(t) = {\rm th}(\tau_i^* t^{2i-1})$.

Set $\widehat p = (p_3,\ldots,p_{2g-1})$, $\widehat z = (z_3,\ldots,z_{2g-1})$, and
denote by $\widehat{\sigma}(z_1,\widehat z)$ the rational limit of the genus $g$ hyperelliptic sigma function.

\begin{thm}[compare with \cite{Rat}]\label{T-4}
There is a uniquely defined change of variables $\widehat p \to \widehat z\,:\, p_{2k-1} = \beta_k z_{2k-1} + q_{2k-1}(\widehat z)$,
where $\beta_k \in \mathbb{Q}$ and $q_{2k-1}(\widehat z)$ is a homogeneous polynomial, such that
\[
\widehat{\sigma}(z_1,\widehat z) = {\ShW}_g(p_1,\widehat p),\; z_1 = p_1.
\]
\end{thm}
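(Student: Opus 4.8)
The plan is to realize both $\widehat{\sigma}(z_1,\widehat z)$ and ${\ShW}_g(p_1,\widehat p)$ as values of one and the same universal Adler--Moser polynomial $\theta_g$, and then to read off the change of variables by matching the two resulting sets of Adler--Moser parameters. By Theorem \ref{ts5} the rational limit $\widehat\sigma$ coincides with $m_g$, and the sequence $\{m_g\}$ satisfies the Adler--Moser equations \eqref{F-9}. Since $m_g$ is a polynomial in $x=z_1$ of degree $\tfrac12 g(g+1)$ with unit leading coefficient, it is the genus-$g$ member of the family, and by Definition \ref{D-4} it equals the universal Adler--Moser polynomial $\theta_g$ at uniquely determined parameter values; thus there are homogeneous polynomials $T_k(\widehat z)$ with $\wt T_k=-(2k-1)$, $k=2,\ldots,g$, for which
\[
\widehat\sigma(z_1,\widehat z)=\theta_g\bigl(z_1,T_2(\widehat z),\ldots,T_g(\widehat z)\bigr),
\]
the entries $T_2=-3z_3$, $T_3=45z_5$, $\ldots$ of the Remark following Example \ref{ex5} being the first instances. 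On the other side, the Corollary to Theorem \ref{T-1} already presents ${\ShW}_g$ in the same shape, but with parameters depending \emph{linearly} on the $p$-variables: writing $S_k(\widehat p)=c_k\,p_{2k-1}$ with $c_k=(-1)^{k+1}\tfrac{1}{2k-1}\tfrac{\mu_k}{\mu_{k-2}}\neq0$, one has ${\ShW}_g=\tfrac1{\mu_g}\theta_g\bigl(p_1,S_2(\widehat p),\ldots,S_g(\widehat p)\bigr)$.

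With both presentations in hand I would identify $p_1=z_1$, so that the arguments $\tau_1$ agree automatically, and then impose equality of the remaining parameters,
\[
S_k(\widehat p)=T_k(\widehat z),\qquad k=2,\ldots,g.
\]
Because each $S_k$ is linear in the single variable $p_{2k-1}$, this is solved directly, with no recursion, by $p_{2k-1}=c_k^{-1}T_k(\widehat z)$. A weight count then produces exactly the asserted form: since $\wt T_k=-(2k-1)$ and $\wt z_{2j-1}=-(2j-1)$, the only degree-one monomial of the correct weight is $z_{2k-1}$, while the variables $z_{2j-1}$ with $j>k$ are too light to occur at all; hence $T_k=\beta_k^{0}z_{2k-1}+(\text{a homogeneous polynomial in }z_3,\ldots,z_{2k-3})$ and $p_{2k-1}=\beta_k z_{2k-1}+q_{2k-1}(\widehat z)$ with $\beta_k=c_k^{-1}\beta_k^{0}\in\mathbb{Q}$ and $q_{2k-1}$ homogeneous. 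The linear answer $p_{2k-1}=(2k-1)z_{2k-1}$ for $g\leqslant4$ and the first genuinely nonlinear value $p_9=9z_9-3z_3^{3}$ at $g=5$, read off from $T_5=-33075z_3^{3}+99225z_9$, are the concrete checks.

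This defines a genuine, invertible change of variables precisely when every diagonal coefficient $\beta_k^{0}$ is nonzero, its uniqueness being then inherited from the uniqueness of the Adler--Moser parameters in Definition \ref{D-4}. To see $\beta_k^{0}\neq0$ I would differentiate $\widehat\sigma=\theta_g(z_1,T_2,\ldots,T_g)$ in $z_{2k-1}$ and set $\widehat z=0$: by the weight argument $z_{2k-1}$ enters only through $T_k$ (linearly) and through the higher $T_j$ (only nonlinearly), so $\partial_{2k-1}\widehat\sigma|_{\widehat z=0}=\beta_k^{0}\,(\partial\theta_g/\partial\tau_k)(z_1,0,\ldots,0)$, and the question reduces to the nonvanishing of $(\partial\theta_g/\partial\tau_k)(x,0,\ldots,0)$. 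The small-genus data even suggest the sharper closed forms $\beta_k^{0}=(-1)^{k+1}\mu_k/\mu_{k-2}$ and $\beta_k=2k-1$, whose proof (equivalently, that the $z$-expansion of $\widehat\sigma$ realizes the normalization \eqref{F-11} in the $z$-variables) would settle the nonvanishing outright.

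The main obstacle is twofold. The substantive input, on which the whole scheme rests, is that $\widehat\sigma=m_g$ genuinely satisfies \eqref{F-9}, i.e.\ that the rational limits of the sigma functions of consecutive genera are linked by the single recursion $m_{g+1}'m_{g-1}-m_{g+1}m_{g-1}'=(2g+1)m_g^2$ announced in Section \ref{S5}; this inter-genus relation is what places $\widehat\sigma$ in the Adler--Moser family in the first place. The technical obstacle is to make the nonvanishing $\beta_k^{0}\neq0$ uniform in both $k$ and $g$, combining the weight argument above with the recursive structure of the Adler--Moser polynomials and the normalization \eqref{F-11}. Finally, the overall scalar must be tracked carefully: matching $S_k=T_k$ identifies $\theta_g$ on both sides, so the normalization conventions for the monic $\widehat\sigma$ and for ${\ShW}_g=\tfrac1{\mu_g}\theta_g$ must be reconciled under the constraint $z_1=p_1$ before the equality is stated in the normalized form above.
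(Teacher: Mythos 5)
The paper contains no internal proof of Theorem \ref{T-4}: it is stated ``compare with \cite{Rat}'' and is in effect imported from Theorem 6.3 of \cite{Rat}, where the identification of the rational sigma limit with the Schur--Weierstrass polynomial is established directly from the structure of the sigma expansion, not through Adler--Moser theory. Your route runs in the opposite direction, and that is exactly where it breaks: the one substantive input you need --- that the sequence $m_g=\widehat\sigma$ of rational sigma limits satisfies the recursion \eqref{F-9} --- is nowhere proved independently, neither in your sketch nor in the paper. Section \ref{S5} only announces it (``we will further show\dots''), and the demonstration the paper has in mind is precisely the chain of Section \ref{SL}: ${\ShW}_g$ satisfies \eqref{F-8} by the Wronskian Lemma \ref{L-4}, the rescaling $\theta_k=\mu_k\varphi_k$ of Lemma \ref{L-44} converts \eqref{F-8} into \eqref{F-9}, and Theorem \ref{T-4} then transports the recursion to $\widehat\sigma$. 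Taking the recursion as a hypothesis in a proof of Theorem \ref{T-4} is therefore circular within this paper's logic. To break the circle you would need an independent argument --- for instance, deriving the inter-genus relation directly from the characterization of $m_{g-1}$, $m_g$, $m_{g+1}$ by the systems \eqref{H3} of Theorem \ref{ts5}, which requires comparing the operators $\widehat{H}_0,\widehat{H}_2,\widehat{H}_4$ across three different genera; you correctly identify this as the main obstacle but offer nothing to overcome it.

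The two further reservations you record are also real. The nonvanishing $\beta_k^{0}\neq0$ does not follow from \eqref{F-11}, which normalizes $\partial\theta_k/\partial\tau_k$ only for the top parameter of $\theta_k$, whereas you need $(\partial\theta_g/\partial\tau_k)(x,0,\ldots,0)\neq0$ for every $k<g$; this remains unproved. And the scalar issue is more than bookkeeping: with the paper's own normalizations ($\widehat\sigma$ monic in $z_1$, while ${\ShW}_g$ has leading coefficient $\mu_g^{-1}$, e.g.\ ${\ShW}_2=\tfrac13(p_1^3-p_3)$ against $\widehat\sigma=z_1^3-3z_3$), no substitution of the form $p_{2k-1}=\beta_k z_{2k-1}+q_{2k-1}(\widehat z)$ with $z_1=p_1$ can equalize the coefficients of $z_1^{g(g+1)/2}$, so your matching $S_k=T_k$ proves at best $\widehat\sigma=\mu_g\,{\ShW}_g$ under the change of variables; the identity must be stated with this factor (or with a rescaled sigma). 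The parts of your argument that do work are the weight count giving the triangular form of the substitution, with $q_{2k-1}$ depending only on $z_3,\ldots,z_{2k-3}$, and the uniqueness inherited from Definition \ref{D-4}; but as it stands the proposal is a reduction of Theorem \ref{T-4} to an unproved statement that is, in this paper, logically downstream of Theorem \ref{T-4} itself, not a proof.
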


\section*{Thanks}

The authors are grateful to A.\,P.\,Veselov and V.\,N.\,Rubtsov for useful references to works on Adler--Moser polynomials and their generalizations.

\end{document}